\def\tablenotes{\bgroup\parfillskip=0pt plus 1fil
\leftskip=0pt\relax \rightskip=0pt
\vskip2pt\footnotesize}
\def\endtablenotes{\vskip1pt\egroup}
\def\sphline{\noalign{\vskip3pt}\hline\noalign{\vskip3pt}}
\newtheorem{theorem}{Theorem}[section]
\newtheorem{proposition}[theorem]{Proposition}
\newtheorem{lemma}[theorem]{Lemma}
\newtheorem{definition}[theorem]{Definition}
\renewcommand{\epsilon}{\varepsilon}
\renewcommand{\leq}{\leqslant}
\renewcommand{\geq}{\geqslant}
\renewcommand{\d}{\mathrm{d}}
\renewcommand{\epsilon}{\varepsilon}
\newcommand{\TimeDeriv}{\frac{\textrm{d}}{\textrm{dt}}}
\let\oldequation\equation
\let\oldendequation\endequation
\renewenvironment{equation}
  {\linenomathNonumbers\oldequation}
  {\oldendequation\endlinenomath}
\let\oldalign\align
\let\oldendalign\endalign
\renewenvironment{align}
  {\linenomathNonumbers\oldalign}
  {\oldendalign\endlinenomath}
\renewenvironment{align*}
  {\linenomathNonumbers\oldalign\notag}
  {\notag \oldendalign \endlinenomath}   
\begin{document}
\title{Inheritance of intracellular viral RNA in a multiscale model of hepatitis C infection}
\author{Tyler Cassidy\textsuperscript{1}, Giulia Belluccini\textsuperscript{2}, Sarafa A. Iyaniwura\textsuperscript{3}, Ruy M. Ribeiro\textsuperscript{2}, and   Alan S. Perelson\textsuperscript{2} }  
\maketitle
\small{ 
\textsuperscript{1}  School of Mathematics, University of Leeds, Leeds, LS2 9JT, United Kingdom.  
\\
\textsuperscript{2} Theoretical Biology and Biophysics, Theoretical Division, Los Alamos National Laboratory, Los Alamos, NM 87545, USA. 
\\
\textsuperscript{3}  Vaccine and Infectious Disease Division, Fred Hutchinson Cancer Center, Seattle, WA, USA.
\maketitle

\section*{Abstract}
Multiscale mathematical models of hepatitis C infection have been instrumental in our understanding of direct acting antivirals. These models include the mechanisms driving intracellular viral production and explicitly model the intracellular concentration of viral RNA. Incorporating proliferation of infected hepatocytes in these models can be subtle, as infected daughter cells inherit viral RNA from the proliferating mother cell. In this note, we show how to incorporate this inheritance within a multiscale model of HCV infection. As in typical multiscale models of HCV infection, we show that this model is mathematically equivalent to a system of ordinary differential equations and perform bifurcation analysis of the resulting ODE that demonstrates that proliferation of infected hepatocytes can lead to infection persistence even if the basic repoductive number is less than one.

\clearpage 

\section{Introduction}  

Mathematical modeling has been instrumental in our understanding of viral kinetics during hepatitis C virus (HCV) infection. For example, viral dynamic modeling has directly linked observed viral kinetics during antiviral treatment with mechanistic insights into the HCV viral lifecycle \citep{Perelson2015,Neumann1998,Perelson2002,Sachithanandham2023}. These viral dynamics models have also identified the specific antiviral effects of direct acting antivirals against HCV and determined the minimum treatment duration to drive cure of HCV infection \citep{Neumann1998,Dahari2007,Guedj2011,Dahari2007a,Snoeck2010,Iwanami2020}. Recent work has coupled the standard viral dynamics model with essential portions of the intracellular viral life cycle in multiscale viral dynamics models \citep{Guedj2013,Quintela2018,Cardozo2020}. These multiscale models typically consist of one or more coupled structured partial differential equations (PDEs) that explicitly model the production of virus within infected hepatocytes with a system of integro-differential equations describing the dynamics of uninfected hepatocytes, infected hepatocytes, and circulating virus. 

These multiscale models can be mathematically complex to analyse and computationally demanding to simulate \citep{Rong2013a,Wang2024,Quintela2018,Rong2013,Guedj2010,Magal2010}. Accordingly, \citet{Kitagawa2018} established the mathematical equivalence between the coupled system of integro-differential equations, for the case in which the model parameters are constants, and a system of ordinary differential equations (ODEs) using techniques similar to the linear chain trick. This equivalence greatly facilitates the use of these multiscale models when analysing clinical data, as the equivalent ODE formulation can be used to estimate model parameters, and has been extensively utilized both in the context of HCV and other viral infections~\citep{Iyaniwura2024,Goncalves2021,Kitagawa2023}. Alternative modeling frameworks, either in the form of highly detailed ODE models of intracellular viral components \citep{Aunins2018,Zitzmann2020,Guedj2010,Dahari2007b} or multiscale models that utilize the intracellular concentration of viral RNA as an independent variable in the resulting PDE \citep{WootdeTrixhe2021,WootdeTrixhe2015} have also been used to understand the dynamics of HCV infection. However, to our knowledge, no existing models capture the inheritance of intracellular viral material upon proliferation of HCV infected hepatocytes, which is particularly relevant for HCV, as it is a completely cytoplasmic virus.

The HCV viral life cycle begins with virus entry and release of the positive strand viral RNA (vRNA) within the cytoplasm of infected hepatocytes. This vRNA is translated and replicated into a negative sense RNA intermediary that permits replication of positive strand vRNA \citep{Li2015b,Quintela2018}. This positive strand vRNA is then assembled into enveloped viral particles that are released into the circulation. Importantly, each stage of the HCV viral replication cycle occurs in the cytoplasm of infected hepatocytes. Consequently, the intracellular vRNA within the cytoplasm is divided during proliferation of infected cells, with daughter cells inheriting intracellular viral material. Including infected cell proliferation in mathematical models that do not capture the dynamics of the intracellular viral life cycle is straightforward, as the daughter cells are treated as being identical to the proliferating mother cell \citep{Snoeck2010,Dahari2009a,Reluga2009,Dahari2007,Dahari2007a,Dahari2005}. Conversely, in the multiscale models that have elucidated the effect of direct acting antivirals, accurately capturing infected cell proliferation is delicate, as these models must explicitly track the inheritance of intracellular virus from mother to daughter cells.  

In recent work, \citet{Elkaranshawy2024} adapted the standard multiscale model of HCV infection introduced by \citet{Guedj2013} to include the proliferation of infected cells. However, \citet{Elkaranshawy2024} worked directly with the equivalent ODE system derived by \citet{Kitagawa2018} rather than considering the underlying multiscale coupled system of PDEs and integro-differential equations. We show in Appendix~\ref{Appendix:Comparison} that this direct adaptation of the ODE model results in the spontaneous generation of intracellular vRNA during proliferation of infected cells thus invalidating the model. Here, we adapt the standard multiscale model of HCV infection\citep{Guedj2013} \citet{Guedj2013} to include infected cell proliferation beginning from the PDE formulation that distinguishes between intracellular and extracellular dynamics. Then, as in the multiscale modeling framework analysed by \citet{Kitagawa2018}, we demonstrate that the resulting multiscale PDE model is equivalent to a system of ODEs under the assumption that all intracellular model parameters are constant functions of infection age. Consequently, we develop a modeling framework that captures both the intracellular viral life cycle and the inheritance of intracellular virus during proliferation. 

In developing this multiscale model, we initially only distinguish between newly infected cells and those that arise from proliferation of an infected mother cell. However, infected hepatocytes arising from proliferation will, on average, inherit half the vRNA from the mother cell before producing vRNA throughout their lifetime. Consequently, intracellular vRNA may accumulate within later generations of infected hepatocytes. We therefore develop a model to track the inheritance of vRNA across generations of infected hepatocytes. The resulting model is an infinite system of coupled partial differential equations that captures generational inheritance of vRNA in a similar manner to \citet{Belluccini2022}. We show that the dynamics of this generational model are identical to those of the simpler modeling framework that only distinguishes between newly infected cells and those arising from proliferation. 

Finally, we study the qualitative dynamics of the multiscale model with proliferation and vRNA inheritance. We calculate the basic reproduction number $\mathcal{R}$ of this multiscale model and give sufficient conditions to ensure that the uninfected equilibrium undergoes the standard forward transcritical bifurcation at $\mathcal{R} =1$. We also demonstrate the existence of proliferation-driven bistability between the uninfected equilibrium and a total-infection equilibrium that is driven by a backwards bifurcation of the uninfected equilibrium. This bifurcation analysis extends earlier work by \citet{Reluga2009}, who identified a backwards bifurcation in the standard viral dynamics model with proliferation of infected cells under quasi-steady state assumptions. We then study the boundary between these forward and backward bifurcations of the uninfected equilibrium by demonstrating the existence of a saddle-node transcritical bifurcation. To our knowledge, this is the first saddle-node transcritical bifurcation identified in a relatively simple model of viral dynamics with only quadratic non-linearities. We illustrate our analytical results using numerical bifurcation analysis in Matcont \citep{Dhooge2008}.
% \citep{Shimotohno2002,Kronenberger2000,Webster2013,Canchis2004,Kronenberger2004}
% \citep{WootdeTrixhe2021,Goyal2017}
% \citep{ Elkaranshawy2024}

\section{Multiscale model of the HCV viral life cycle}
 
We begin with the standard viral dynamics model that captures the extracellular dynamics in our multiscale model of HCV infection. The standard viral dynamics model tracks the dynamics of uninfected hepatocytes, $T(t)$, infected hepatocytes, $I(t)$, and free virus, $V(t)$ \citep{Perelson1996,Neumann1998}. Typically, uninfected hepatocytes are produced at a constant rate $\lambda$, have a per capita death rate $d_T$, and are infected at a constant rate $\beta $ by free virus. Infected hepatocytes die at per capita rate $\delta$ and produce free virus with a per capita production rate $p$. The free virus is then cleared at a per capita rate $c$, leading to the standard viral dynamics model
\begin{equation}\label{Eq:StandardViralDynamicsODE}
\left \{
\begin{aligned}
\TimeDeriv T(t) & = \lambda - d_T T(t) -\beta V(t) T(t), \\
\TimeDeriv I(t) & = \beta V(t)T(t) - \delta I(t), \\
\TimeDeriv V(t) & = pI(t) - cV(t).
\end{aligned}
\right.
\end{equation}  
This model has been instrumental in our understanding of many viral infections, including HIV-1 and HCV \citep{Perelson2002,Neumann1998,Perelson1996,Perelson2015,Perelson1997}, and forms the backbone of the interactions at the extracellular scale in our multiscale model. 

\subsection{Multiscale model of HCV infection without infected cell proliferation}

The standard viral dynamics model Eq.~\eqref{Eq:StandardViralDynamicsODE} does not account for the intracellular processes leading to the production of virus. Thus, \citet{Guedj2013} introduced a multiscale model to capture the production of vRNA within infected hepatocytes. This model accounts for the infection age, or time since infection of an infected hepatocyte. Then, with $i(t,a)$ denoting the density of infected cells with infection age $a$ at time $t$, the total concentration of infected cells is 
\begin{align*}
 I(t) = \int_0^{\infty} i(t,a)\d a.
\end{align*}
Infected cells are lost at an infection age independent rate $\delta$. We denote the amount of vRNA per infected hepatocyte with infection age $a$ at time $t$ by $r(t,a)$, and assume that vRNA is produced by infected hepatocytes at a constant rate $\alpha.$ Intracellular vRNA degrades at a per capita rate $\mu$ or is assembled into viral particles and exported into the circulation at a per capita rate $\rho$. Thus, the rate at which vRNA is secretedas virions  by infected hepatocytes into the circulation is given by $\rho R(t),$ where $R(t)$ is the total concentration of intracellular vRNA within infected hepatocytes defined by
\begin{align*}
R(t) = \int_{0}^{\infty} r(t,a)i(t,a) \d a.
\end{align*}

The dynamics of uninfected hepatocytes and circulating virus are unchanged from the standard viral dynamics model, so the multiscale model of HCV infection is given by
\begin{equation} \label{Eq:NoProliferationMultiScale}
\left \{ 
\begin{aligned}
\TimeDeriv T(t) & = \lambda - d_T T(t) -\beta V(t) T(t), \\
(\partial_t+ \partial_a)i(t,a) & =  - \delta i(t,a), \\
(\partial_t+ \partial_a)r(t,a) & = \alpha  -(\mu+\rho) r(t,a), \\
\TimeDeriv V(t) & = \int_{0}^{\infty} \rho r(t,a)i(t,a) \d a  - cV(t).
\end{aligned}
\right.
\end{equation}
The boundary conditions of Eq.~\eqref{Eq:NoProliferationMultiScale} correspond to newly infected cells, which are infected hepatocytes with infection age $a =0$ and $\zeta$ copies of intracellular vRNA 
\begin{align*}
i(t,0) = \beta V(t) T(t) \ \textrm{and} \ r(t,0) = \zeta.
\end{align*}
\citet{Guedj2013} assumed that $\zeta = 1$ vRNA/cell, although \citet{Kitagawa2018} considered multiple infection events and allowed $\zeta \geq 1$.
Then, Eq.~\eqref{Eq:NoProliferationMultiScale} has initial conditions
\begin{align*}
V(0) = V_0,\ T(0) = T_0, \ i(0,a) = f_i(a), \ \textrm{and} \ r(0,a) = f_r(a).
\end{align*}
The initial age distributions $f_i$ and $f_r$ are typically taken to be integrable, non-negative functions. We note that, along the characteristics of Eq.~\eqref{Eq:NoProliferationMultiScale}, the concentration of infected hepatocytes decays exponentially so we do not impose a maximal lifespan for these cells. A similar argument shows that the density of vRNA within infected hepatocytes, $r(t,a),$ is bounded as a function of $a$. Consequently, the integral defining $R(t)$ is well-defined. We give a schematic representation of the multiscale model Eq.~\eqref{Eq:NoProliferationMultiScale} in Fig.~\ref{Fig:ModelSchematic}A and B.

\begin{figure}[!htbp]
\noindent
\centering  
 \begin{tabular} {c}
\includegraphics[trim=  20 21 10 10,clip,width=0.95\textwidth]{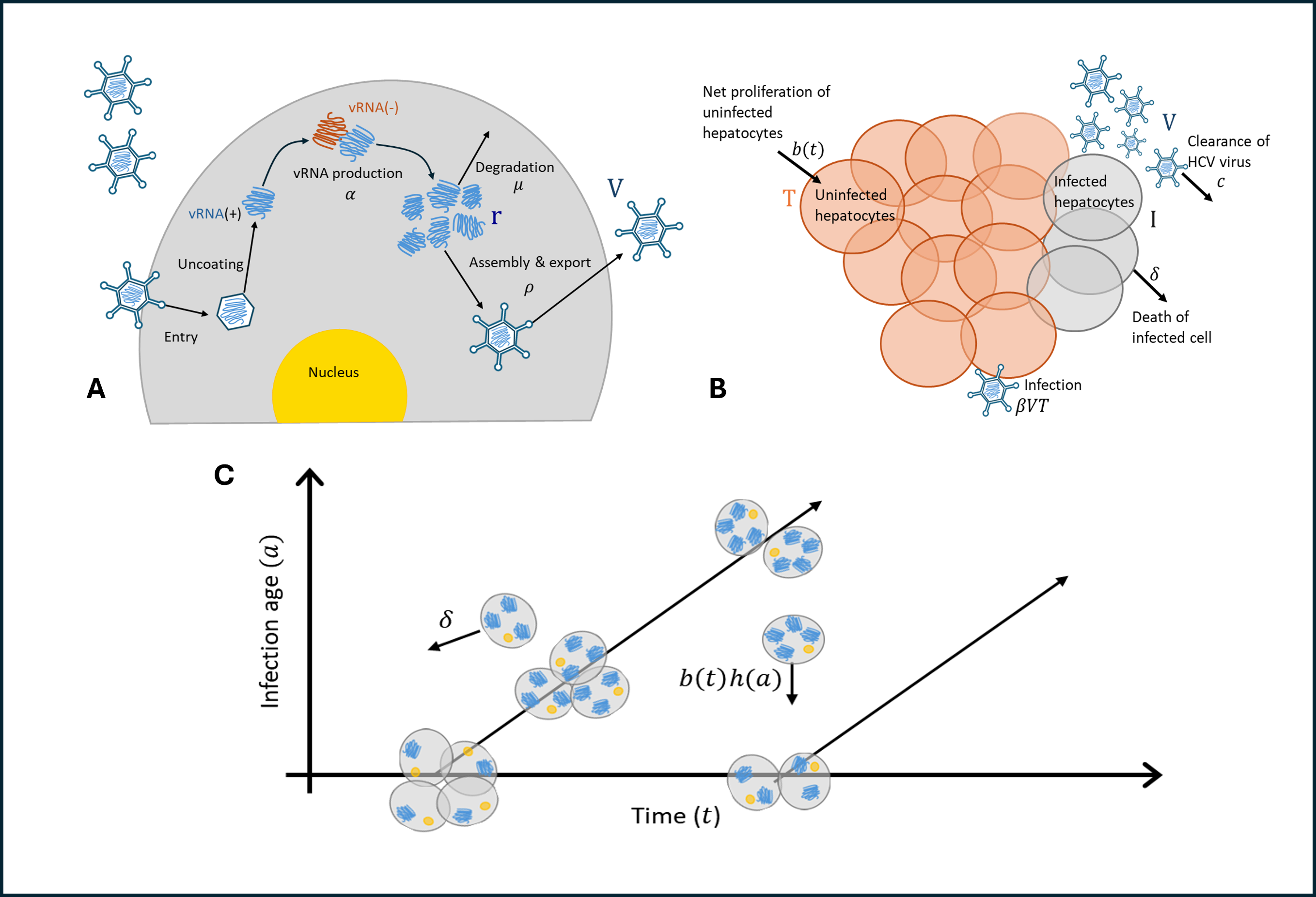} 
\end{tabular}
\caption{ \textbf{Schematic of the multiscale model of HCV infection}.  \textbf{A)} The intracellular HCV life cycle, which begins with infection of a hepatocyte and the release of positive strand vRNA, shown in blue and denoted vRNA(+), into the cell cytoplasm following the uncoating of the HCV viral capsid. This vRNA is translated into viral proteins, such as the RNA-dependent RNA polymerase and other proteins that first copy the positive strand vRNA into its complementary negative strand, shown in orange and denoted vRNA(-), and then form a replication complex, or replication machine, that generates new positive stand vRNA at rate $\alpha$. This positive strand vRNA can be degraded at rate $\mu$, or assembled into HCV particles and exported into the circulation at rate $\rho$. \textbf{B)} The HCV extracellular dynamics, where uninfected hepatocytes $(T)$ are produced due to net proliferation at rate $b(t)$, and become infected cells $(I)$, following infection by virus $(V)$ at rate $\beta$. Infected cells are lost at per capita rate $\delta$ and secrete vRNA containing particles as virus ($V$) into the circulation. The circulating virus is cleared at per capita rate $c$. \textbf{C)} The progression of a cohort of infected hepatocytes with chronological time and infection age on the horizontal and vertical axes, respectively. Upon infection, infected hepatocytes have infection age $a = 0$ and progress through time-age space along the solid lines, which are the characteristic curves of the multiscale PDE. Infected cells are removed from the cohort due to death, which occurs at a constant per capita rate $\delta$, or when proliferating at rate $b(t)h(a)$. An infected hepatocyte with four blue intracellular vRNA molecules proliferates to produce two daughter cells, each with two intracellular vRNA molecules, which appear at the boundary with infection age $a = 0$.}
\label{Fig:ModelSchematic}
\end{figure}

Then, under the assumption that the rates $\delta, \mu, \alpha,$ and $\rho$ are independent of infection age, \citet{Kitagawa2018} used the PDE~\eqref{Eq:NoProliferationMultiScale} and the definitions of  $I(t)$ and $R(t)$ to show that Eq.~\eqref{Eq:NoProliferationMultiScale} is mathematically equivalent to
\begin{equation} \label{Eq:NoProliferationODE}
\left \{
\begin{aligned}
\TimeDeriv T(t) & = \lambda - d_T T(t) -\beta V(t) T(t), \\
\TimeDeriv I(t) & = \beta V(t)T(t) - \delta I(t), \\
\TimeDeriv R(t) & = \zeta \beta V(t) T(t) +  \alpha I(t)  - (\rho+\mu+\delta)R(t), \\
\TimeDeriv V(t) & =  \rho R(t)  - cV(t).
\end{aligned}
\right.
\end{equation}
Here, the initial conditions for $I$ and $R$ arise naturally from the initial densities of the multiscale model Eq.~\eqref{Eq:NoProliferationMultiScale} and are given by 
\begin{align*}
I(0) = \int_0^{\infty} i(0,a) \d a = \int_0^{\infty} f_i(a) \d a \quad \textrm{and} \quad R(0) = \int_0^{\infty} r(0,a)i(0,a) \d a = \int_0^{\infty} f_r(a) f_i(a) \d a.
\end{align*} 
Often, models such as Eq.~\eqref{Eq:NoProliferationODE} are used to describe the viral kinetics in chronically infected participants in clinical trials of novel antivirals that, for example, reduce the rate of vRNA replication, $\alpha$, or the rate of viral production $\rho$ \citep{Guedj2013,Lau2016}. We show how the assumption of chronic infection naturally defines the initial densities $f_{i}(a)$ and $f_{r}(a)$ in Appendix~\ref{Appendix:InitialDensities} and discuss how to incorporate the antiviral effects of direct acting antivirals in Appendix~\ref{Appendix:Treatment}.

\section{Infected cell proliferation in a multiscale model of the HCV life cycle}\label{Sec:MultiscaleProliferation}

We now incorporate cellular proliferation in the multiscale model Eq.~\eqref{Eq:NoProliferationMultiScale}. Rather than including the constant production of uninfected hepatocytes with rate $\lambda$ and the death of uninfected hepatocytes with per capita rate $d_T$ as in Eq.~\eqref{Eq:NoProliferationODE}, we assume that uninfected hepatocytes proliferate with net growth rate $b(t)$ as the immigration of new hepatocytes from precursors is typically slow \citep{Reluga2009,Goyal2017}. This net growth rate $b(t)$ is often chosen to capture logistic growth \citep{Elkaranshawy2024,Goyal2019,Goyal2017,Dahari2009a,Dahari2007a,Dahari2007b}, although other choices are possible. The resulting ODE for uninfected hepatocytes is therefore 
\begin{align*}
\TimeDeriv T(t) & =  b(t) T(t) - \beta V(t) T(t) .
\end{align*}

Next, we consider the proliferation of infected hepatocytes within the multiscale modeling framework before demonstrating how to model cellular inheritance of intracellular vRNA. Here, with the exception of infected cell proliferation, the intra- and extracellular processes leading to vRNA production are unchanged from Eq.~\eqref{Eq:NoProliferationMultiScale} and represented in Fig.~\ref{Fig:ModelSchematic}. 

Now, let $H(t,a)$ be the rate at which infected hepatocytes with infection age $a$ proliferate at time $t$. We assume $H(t,a)$ can be written as the product of the net growth rate of infected cells, denoted by $b(t)$, and the likelihood that a cell with infection age $a$ proliferates, denoted by $h(a)$, so $ H(t,a) = h(a)b(t).$ 
% \begin{align*}
% H(t,a) = h(a)b(t).
% \end{align*}
There is \textit{in vitro} evidence that infected hepatocytes may proliferate slower than uninfected cells \citep{Webster2013}. In our multiscale modeling framework, the function $h(a)$ captures this difference in proliferation rates between uninfected and infected hepatocytes, as taking $h(a) < 1$ implies that infected hepatocytes proliferate slower than uninfected cells. Some existing ODE models of HCV infection also sometimes include different proliferation rates for uninfected and infected hepatocytes \citep{Dahari2009a,Dahari2007b,Reluga2009}.

Here, we assume that the probability of a cell with infection age $a$ undergoing proliferation is given by a random variable $\mathbb{A}$. Then, $h(a)$ is the hazard rate defined by $\mathbb{A}$ \citep{Cassidy2018a}. Proliferation results in the removal of the mother cell with infection age $a >0$ and the production of two daughter cells with infection age $a = 0$ for a population net increase of one cell per mitotic event \citep{Cassidy2021}. We illustrate the proliferation of infected hepatocytes within this time-since-infection framework in Fig.~\ref{Fig:ModelSchematic}C.

We distinguish between newly infected cells, denoted by $i_0(t,a)$, and infected cells that arise as the result of proliferation of an infected cell, denoted by $i_p(t,a)$. We assume that the dynamics of these two populations of cells are otherwise identical and satisfy
\begin{align*}
(\partial_t+ \partial_a)i_0(t,a)  =   - [\delta+b(t)h(a)] i_0(t,a) \ \textrm{and} \ 
(\partial_t+ \partial_a)i_p(t,a)  =   - [\delta+b(t)h(a)] i_p(t,a), 
\end{align*}
where we recall that $\delta$ is the infection age independent per capita death rate of infected cells. The total number of infected hepatocytes undergoing proliferation at time $t$ is given by 
\begin{align*}
 \int_0^{\infty} b(t)h(a) i_0(t,a)\d a + \int_0^{\infty} b(t)h(a) i_p(t,a) \d a .
\end{align*}
As proliferation of infected hepatocytes results in daughter cells with infection age $a=0$, these proliferating cells appear in the boundary conditions of the PDEs % , which are given by
\begin{align*}
i_0(t,0) = \beta V(t) T(t) \quad \textrm{and} \quad i_p(t,0) =  2\int_0^{\infty} b(t)h(a)i_0(t,a)\d a + 2\int_0^{\infty} b(t)h(a)i_p(t,a)\d a.
\end{align*}
 
We now turn to the dynamics of intracellular vRNA within proliferating infected hepatocytes. We once again distinguish between the vRNA within newly infected cells, $r_0(t,a)$, and the vRNA within infected cells that arose from proliferation, $r_p(t,a)$. As before, we assume that intracellular vRNA is assembled into viral particles and exported into the circulation at per capita rate $\rho$ and degrades intracellularly at rate $\mu$. When an infected hepatocyte proliferates, the mother cell is removed from the infected cell density with age $a>0$. However, in the same way that death of infected cells does not decrease the amount of vRNA within the remaining infected cells, this proliferation does \textit{not} result in a corresponding decrease in the amount of vRNA within the remaining cells of age $a$. Consequently, the PDEs describing $r_0(t,a)$ and $r_p(t,a)$ are given by 
\begin{align*}
(\partial_t+ \partial_a)r_0(t,a) = \alpha  - [\mu+\rho] r_0(t,a) \quad \textrm{and} \quad
(\partial_t+ \partial_a)r_p(t,a) = \alpha  - [\mu+\rho] r_p(t,a). 
\end{align*}
\sloppy
The boundary condition for $r_0(t,a)$ corresponds to vRNA arising from infection with $r_0(t,0 ) = \zeta$. Now, the total concentration of vRNA within proliferating infected cells that have not previously proliferated, i.e.~the cells denoted by $i_0(t,a)$ that are proliferating, is given by
\begin{align*}
\int_0^{\infty} r_0(t,a)  b(t)h(a)i_0(t,a) \; \d a ,
\end{align*} 
while the total concentration of vRNA within proliferating infected cells that have previously proliferated, i.e. infected cells $i_p(t,a)$ that are proliferating, is given by 
\begin{align*}
 \int_0^{\infty} r_p(t,a)  b(t)h(a)i_p(t,a) \; \d a.
\end{align*} 
Upon mitosis, the daughter cells inherit, on average, half the intracellular vRNA of the mother cell prior to division. The average intracellular vRNA is given by the total concentration of vRNA within proliferating infected cells divided by the total number of proliferating cells, so % we obtain
\begin{align*}
 r_p(t,0) = \frac{1}{2}\left[\frac{ \int_0^{\infty} r_0(t,a)  b(t)h(a)i_0(t,a) \; \d a + \int_0^{\infty} r_p(t,a)  b(t)h(a)i_p(t,a)\; \d a }{\int_0^{\infty} b(t)h(a)i_0(t,a)\; \d a+\int_0^{\infty} b(t)h(a)i_p(t,a)\; \d a} \right].
\end{align*} 
The assumption that each daughter cell inherits half the average vRNA within proliferating cells may appear to over-simplify the potential accumulation of vRNA within infected hepatocytes. Consequently, in Section~\ref{Sec:GenerationTracking}, we extend this model to explicitly track the generational accumulation of vRNA within infected hepatocytes. We then show that the population-level dynamics of the model that tracks generation inheritance are precisely the same as assuming that each daughter cell inherits half the average vRNA within proliferating cells.

Finally, we assume that virus is produced from infected cells, $i_0(t,a)$ and $i_p(t,a)$, by assembly and secretion of their vRNA, $r_0(t,a)$ and $r_p(t,a)$, as virus particles with dynamics that are unchanged from the multiscale model without proliferation in Eq.~\eqref{Eq:NoProliferationMultiScale}. We obtain
\begin{equation} \label{Eq:ProliferationMultiScale}
\left \{  
\begin{aligned}
\TimeDeriv T(t) & =  b(t) T(t)  -\beta V(t) T(t), \\
(\partial_t+ \partial_a)i_0(t,a) & =  - [\delta+b(t)h(a)] i_0(t,a), \\
(\partial_t+ \partial_a)i_p(t,a) & =  - [\delta+b(t)h(a)] i_p(t,a), \\ 
(\partial_t+ \partial_a)r_0(t,a) & = \alpha - [\mu+\rho] r_0(t,a), \\
(\partial_t+ \partial_a)r_p(t,a) & = \alpha  - [\mu+\rho] r_p(t,a), \\
\TimeDeriv V(t) & = \int_{0}^{\infty} \rho \left[ r_0(t,a)i_0(t,a) + r_p(t,a)i_p(t,a)\right]  \d a   - cV(t),
\end{aligned}
\right. 
\end{equation}
with boundary conditions
\begin{equation}\label{Eq:BoundaryConditionsProliferationMultiScale}
\left \{
\begin{aligned}
i_0(t,0) & = \beta V(t) T(t), \\
i_p(t,0) & =  2\int_0^{\infty} b(t)h(a)i_0(t,a)\d a + 2\int_0^{\infty} b(t)h(a)i_p(t,a)\d a, \\
r_0(t,0) & = \zeta, \\
r_p(t,0) & = \frac{1}{2}\left[\frac{ \int_0^{\infty} r_0(t,a)  b(t)h(a)i_0(t,a)\; \d a + \int_0^{\infty} r_p(t,a)  b(t)h(a)i_p(t,a) \; \d a }{\int_0^{\infty} b(t)h(a)i_0(t,a)\; \d a+\int_0^{\infty} b(t)h(a)i_p(t,a) \; \d a} \right]. 
\end{aligned}
\right.
\end{equation}
As in the model without proliferation, the initial conditions of Eq.~\eqref{Eq:ProliferationMultiScale} are given by
\begin{align*}
V(0) = V_0, \quad  T(0) &= T_0, \quad i_0(0,a) = f_{i_0}(a), \quad  i_p(0,a) = f_{i_p}(a), \\
r_{0}(0,a) &= f_{r_0}(a), \quad  \textrm{and} \quad  r_{p}(0,a) = f_{r_p}(a).
\end{align*}
Once again, we assume that the initial densities $f_{i_0}, \ f_{i_p}, \ f_{r_0},$ and $f_{r_p}$ are non-negative, integrable functions. In the setting of chronic infection, these initial densities can immediately be determined by projecting the characteristics of Eq.~\eqref{Eq:ProliferationMultiScale} backwards in time as described earlier. Finally, within this framework, the total concentration of infected cells is given by 
\begin{align}\label{Eq:TotalInfectedCellsProliferate}
I(t) = \int_0^{\infty}  i_0(t,a)+i_p(t,a) \; \d a,
\end{align}
while the total amount of intracellular vRNA is given by 
\begin{align}\label{Eq:TotalIntracellularvRNAProliferate}
R(t) = \int_0^{\infty}  r_0(t,a)i_0(t,a)+r_p(t,a)i_p(t,a) \;  \d a.
\end{align}
These integrals are well-defined, as both $i_0(t,a)$ and $i_p(t,a)$ decay exponentially in infection age while the intracellular vRNA concentrations are both bounded above. 

\subsection{The multiscale model including infected cell proliferation admits an equivalent system of ODEs}
 
We now demonstrate that the multiscale model Eq.~\eqref{Eq:ProliferationMultiScale} is mathematically equivalent to a system of ODEs under the assumption that the rates $\delta, \mu$, and $\rho$ are all independent of infection age. At this point, we have not made any assumptions on $h(a)$. We begin by deriving an ODE for the total amount of vRNA within infected cells, given by $R(t)$ in Eq.~\eqref{Eq:TotalIntracellularvRNAProliferate}. We make the change of variables $u = t-a$ to find
\begin{align*}
R(t) = \int_{-\infty}^{t} i_0(t,t-u)r_0(t,t-u)\d u + \int_{-\infty}^{t} i_p(t,t-u)r_p(t,t-u)\d u.
\end{align*}
Using Leibniz's rule to differentiate the integral and recalling that the integrands is a product of exponentially decaying and bounded functions which vanishes at infinity \citep{Perthame2006,Cassidy2021}, we obtain 
\begin{align*}
\TimeDeriv R(t) & = i_0(t,0)r_0(t,0) + \int_0^{\infty} r_0(t,a)\left( \partial_t + \partial_a \right)  i_0(t,a) \d a +\int_0^{\infty}i_0(t,a) \left( \partial_t + \partial_a \right) r_0(t,a) \d a \\
&{} \quad + i_p(t,0)r_p(t,0) + \int_0^{\infty} r_p(t,a)\left( \partial_t + \partial_a \right)  i_p(t,a) \d a +\int_0^{\infty}i_p(t,a) \left( \partial_t + \partial_a \right)  r_p(t,a) \d a.
\end{align*}
Using the PDEs for $i_0(t,a),\ i_p(t,a),\ r_0(t,a),$ and $r_p(t,a)$ in Eq.~\eqref{Eq:ProliferationMultiScale}, gives 
\begin{align*}
\TimeDeriv R(t) & = i_0(t,0)r_0(t,0) + \alpha I(t) - (\delta+\mu+\rho)R(t) \\
&{} \ -  \int_0^{\infty} b(t) h(a) \left[ r_0(t,a) i_0(t,a) + r_p(t,a) i_p(t,a) \right]\; \d a + i_p(t,0)r_p(t,0) .
\end{align*}
% \begin{align*}
% \TimeDeriv R(t) & = i_0(t,0)r_0(t,0) + \alpha \int_0^{\infty} i_0(t,a) \d a - (\delta+\mu+\rho) \int_0^{\infty} r_0(t,a) i_0(t,a) \d a -  \int_0^{\infty} b(t) h(a) r_0(t,a) i_0(t,a) \d a \\
% &{} \quad + i_p(t,0)r_p(t,0) +\alpha \int_0^{\infty} i_p(t,a)\d a  - (\delta+\mu+\rho) \int_0^{\infty}i_p(t,a) r_p(t,a) \d a -  \int_0^{\infty} b(t) h(a) r_p(t,a) i_p(t,a) \d a.
% \end{align*}
Then, the boundary conditions in Eq.~\eqref{Eq:BoundaryConditionsProliferationMultiScale} give
\begin{align*}
i_0(t,0)r_0(t,0) & = \zeta \beta V(t) T(t), \\
i_p(t,0)r_p(t,0) & =  \int_0^{\infty}  b(t) h(a) \left[ r_0(t,a)i_0(t,a) + r_p(t,a)i_p(t,a) \right] \d a . 
% & =  \left( 2 \int_0^{\infty} b(t)h(a)(i_0(t,a)+i_p(t,a) )\d a  \right) \left( \frac{1}{2}\left[\frac{ \int_0^{\infty}  b(t) h(a) \left( r_0(t,a)i_0(t,a) + r_p(t,a)i_p(t,a) \right)  \d a}{ \int_0^{\infty} b(t)h(a)( i_0(t,a) +i_p(t,a) ) \d a } \right] \right) \\
\end{align*}
We therefore obtain
\begin{align*}
\TimeDeriv R(t) & = \zeta \beta V(t) T(t) + \alpha I(t)  - (\delta + \mu+ \rho)R(t). 
\end{align*}
We note that this differential equation is precisely the same as what was found by \citet{Kitagawa2019} in the multiscale model that does not include proliferation of infected cells. This is unsurprising, as proliferation of infected hepatocytes does not change the total amount of intracellular vRNA since the vRNA within a proliferating mother cell is conserved as it is divided between the two daughter cells.  

We use the same approach for $I(t)$ defined by Eq.~\eqref{Eq:TotalInfectedCellsProliferate} to obtain
% We now consider $I(t)$ defined by Eq.~\eqref{Eq:TotalInfectedCellsProliferate} and, using the same approach as for $R(t)$, obtain
% make the change of variables $ u = t-a$ to obtain
% \begin{align*}
% I(t) = \int_{-\infty}^{t} i_0(t,t-u)+i_p(t,t-u)\d u.
% \end{align*}
% Then, using Leibniz's rule to differentiate the integral, we obtain
% \begin{align*}
% \TimeDeriv I(t) & = i_0(t,0)+i_p(t,0) + \int_{0}^{\infty} \left( \partial_t + \partial_a \right)\left[ i_0(t,a)+i_p(t,a) \right] \d a.
% \end{align*}
% So, upon using the PDEs for $i_0(t,a)$ and $i_p(t,a)$ and recalling that $\delta$ is constant, we find
 \begin{align*}
\TimeDeriv I(t) & = i_0(t,0)+i_p(t,0) - \delta I(t) - \int_0^{\infty} b(t)h(a)\left[ i_0(t,a)+i_p(t,a) \right] \; \d a .
\end{align*}
Inserting the boundary conditions in Eq.~\eqref{Eq:BoundaryConditionsProliferationMultiScale} gives
\begin{align*} 
\TimeDeriv I(t) & =  \beta V(t) T(t) +  b(t) \int_0^{\infty} h(a)\left[ i_0(t,a)+i_p(t,a) \right] \d a - \delta I(t).
%\beta V(t) T(t) + 2 \int_0^{\infty} b(t)h(a)\left[ i_0(t,a) + i_p(t,a)\right] \d a   -  \delta I(t) - \int_0^{\infty} b(t)h(a)\left[ i_0(t,a)+i_p(t,a)\right] \d a \\  
 % & =   \beta V(t) T(t) +  b(t) \int_0^{\infty} h(a)\left[ i_0(t,a)+i_p(t,a) \right] \d a - \delta I(t). 
\end{align*}
%A similar expression was derived for a generic structured population model in \citet{Cassidy2018a} without considering inheritance across generations. There, \citet{Cassidy2018a} demonstrated how specific choices of the hazard rate $h(a)$ can yield equivalent delay differential equation or ODE formulations of Eq.~\eqref{Eq:InfectedCellDDE} through variants of the linear chain trick \citep{Cassidy2020a}. 

Now, we have not yet specified $h(a)$. Assuming the likelihood that an infected cell proliferates is independent of infection age, the hazard rate is a constant function of infection age. Thus, $h(a) = \eta >0 $, which implies that the probability of an infected cell proliferating is exponentially distributed. Importantly, if $\eta = 1$, then infected and uninfected hepatocytes are equally likely to proliferate, while taking $\eta \in (0,1)$ would account for a proliferative fitness cost related to HCV infection, as has been observed experimentally \citep{Webster2013}, and $\eta > 1$ would model a proliferative fitness advantage. We stress that other assumptions on $h(a)$ are possible that would also admit equivalent ODEs via the linear chain technique \citep{Cassidy2020a,Diekmann2017,Cassidy2018a}. 

Then, assuming $h(a)= \eta$ and recalling the definition of $I(t)$ in Eq.~\eqref{Eq:TotalInfectedCellsProliferate} gives
\begin{align} \label{Eq:InfectedCellProliferatingODE}
\TimeDeriv I(t) & =  \beta V(t) T(t) +  \eta b(t) I(t) - \delta I(t). 
\end{align}
We thus obtain the equivalent ODE system to the multiscale model with proliferation in Eq.~\eqref{Eq:ProliferationMultiScale}, which is given by
\begin{equation} \label{Eq:ProliferationEquivalentODE}
\left \{ 
\begin{aligned}
\TimeDeriv T(t) & = b(t)T(t)  -\beta V(t) T(t), \\
\TimeDeriv I(t) & = \beta V(t)T(t) +  \eta b(t) I(t)  - \delta I(t), \\
\TimeDeriv R(t) & = \zeta \beta V(t) T(t) +  \alpha I(t)  - (\rho+\mu+\delta)R(t), \\
\TimeDeriv V(t) & =  \rho R(t)  - cV(t).
\end{aligned}
\right. 
\end{equation}
Here, the initial conditions for $I$ and $R$ once again arise naturally from the initial densities of the multiscale model Eq.~\eqref{Eq:ProliferationMultiScale}.
% and are given by 
% \begin{align*}
% I(0) & = \int_0^{\infty} \left[ i_0(0,a) + i_p(0,a) \right] \d a = \int_0^{\infty} \left[ f_{i_0}(a) + f_{i_p}(a) \right] \d a, \\
% R(0) &  = \int_0^{\infty}  \left[ r_0(0,a)i_0(0,a) + r_p(0,a)i_p(0,a) \right] \d a = \int_0^{\infty} \left[ f_{r_0}(a)f_{i_0}(a) + f_{r_p}(a)f_{i_p}(a) \right] \d a.
% \end{align*} 
Now, if $b(t)$ is chosen so that hepatocytes undergo logistic growth with a carrying capacity $T_{max}$ and growth rate $\gamma$, as in \citep{Dahari2007,Dahari2007a,Iwanami2020,Reluga2009}, we obtain
\begin{equation} \label{Eq:LogisticProliferationEquivalentODE}
\left \{
\begin{aligned}
\TimeDeriv T(t) & = \gamma \left(1-\frac{T(t)+I(t)}{T_{max}} \right) T(t) -\beta V(t) T(t), \\
\TimeDeriv I(t) & = \beta V(t)T(t) +  \eta   \gamma \left(1-\frac{T(t)+I(t)}{T_{max}} \right)  I(t)  - \delta I(t), \\
\TimeDeriv R(t) & = \zeta \beta V(t) T(t) +  \alpha I(t)  - (\rho+\mu+\delta)R(t), \\
\TimeDeriv V(t) & =  \rho R(t)  - cV(t).
\end{aligned}
\right. 
\end{equation}
We compare Eq.~\eqref{Eq:LogisticProliferationEquivalentODE} against the \citet{Elkaranshawy2024} model in the Appendix and show that the \citet{Elkaranshawy2024} model predicts non-realistic production of intracellular vRNA due to infected cell proliferation.

\subsection{Generation tracking of infected hepatocytes}\label{Sec:GenerationTracking}
% \citep{Calogero1968}
In the development of Eq.~\eqref{Eq:ProliferationMultiScale}, we distinguished between cells that arise from new infections, $i_0(t,a)$, and cells that arise from infected cell proliferation, $i_p(t,a)$. In this setting, the boundary condition that defines the amount of intracellular vRNA inherited by newly proliferated daughter cells corresponds to the average amount of vRNA within all proliferating infected cells. However, the amount of vRNA at proliferation of the mother cell determines the amount of intracellular vRNA inherited by the resulting daughter cells. As we expect intracellular vRNA to be produced by infected  cells, and thus accumulate from birth until mitosis, the daughter cells in generation $n$ will have more intracellular vRNA upon mitosis than daughter cells in generation $n-1$. 

However, the resulting generational accumulation of intracellular vRNA is not reflected in the boundary conditions of Eq.~\eqref{Eq:ProliferationMultiScale}, where all daughter cells inherit half the average amount of vRNA of all proliferating mother cells, regardless of their generation. We now show how to account for these generational dynamics via an infinite system of PDEs that tracks the dynamics of each generation of infected hepatocytes. In what follows, the dynamics of newly infected cells, $i_0$, are unchanged from Eq.~\eqref{Eq:ProliferationMultiScale}.

\subsubsection*{Infinite systems of differential equations to model generational inheritance}

We begin by developing a mathematical model for each generation of infected hepatocyte. For $n \geq 1,$ let $i_n(t,a)$ denote the concentration of infected hepatocytes in the $n-$th generation, and let $I(t)$ denote the total concentration of infected hepatocytes. As before, the proliferation rate $b$ may depend on the concentration of uninfected and infected hepatocytes, $T(t)$ and $I(t)$. Then, using the same reasoning as for Eq.~\eqref{Eq:ProliferationMultiScale}, $i_n(t,a)$ satisfies
\begin{align*}
(\partial_t+ \partial_a)i_n(t,a) & =  - [\delta_n +b(t,I(t),T(t))h_n(a)] i_n(t,a).
\end{align*}
In what follows, we assume that the per capita death rate $\delta_n$ and likelihood of proliferation $h_n$ are independent of generation, so $\delta_n = \delta$ and $h_n(a) = h(a)$. For notational simplicity, we suppress the explicit dependence of $b$ on the total concentration of hepatocytes and instead write $ b(t, I(t),T(t)) = b(t).$  The boundary condition of $i_n(t,a)$ corresponds to proliferation of cells in the preceding generation, so the production of infected hepatocytes in generation $n$ is given by
\begin{align*}
i_n(t,0) = 2 \int_{0}^{\infty} b(t)h(a)i_{n-1}(t,a)\d a.
\end{align*}
In this framework, $a$ represents the time-since infection for $i_0(t,a)$ and the time since mitosis for $i_n(t,a)$ with $n = 1,2,...$.
% As we would expect from the biological interpretation of $i_n(t,a)$, the total concentration of infected hepatocytes that have arisen from proliferation is
% \begin{align*}
% i_p(t,a) = \displaystyle \sum_{n=1}^{\infty} i_n(t,a).
% \end{align*}
Similarly, let $r_n(t,a)$ denote the amount of intracellular vRNA within infected cells of age $a$ in generation $n$. Once again assuming that the dynamics of intracelluar vRNA are independent of generation and using the same reasoning as in Eq.~\eqref{Eq:ProliferationMultiScale}, we obtain
\begin{align*}
(\partial_t+ \partial_a)r_n(t,a) & = \alpha  - [\mu+\rho] r_n(t,a).
\end{align*}
Then, the boundary condition of $r_n(t,a)$ corresponds to the amount of vRNA inherited by infected cells in generation $n$ following mitosis of the mother cell in generation $n-1$. The total amount of vRNA within these proliferating cells is given by 
\begin{align*}
 \int_{0}^{\infty} b(t)h(a)i_{n-1}(t,a)r_{n-1}(t,a)\; \d a,
\end{align*}
and this vRNA is equally divided among the resulting daughter cells, now in the $n$-th generation, which determines the boundary condition $i_n(t,0)$. Consequently, each infected cell in generation $n$ inherits 
\begin{align*}
r_n(t,0) = \frac{\int_{0}^{\infty} b(t)h(a)i_{n-1}(t,a)r_{n-1}(t,a) \; \d a }{2 \int_{0}^{\infty} b(t)h(a)i_{n-1}(t,a) \; \d a},
\end{align*}
copies of vRNA. As we are not \textit{a priori} imposing an upper limit on the number of generations, $n$, the resulting model is a coupled infinite system of PDEs for the generational dynamics of $i_n(t,a)$ and $r_n(t,a)$ with $n = 1,2,3,...$. As in Section~\ref{Sec:MultiscaleProliferation}, the total concentration of infected cells in the $n-$th generation is given by  
\begin{align*}
I_n(t) = \int_0^{\infty} i_n(t,a) \d a.
\end{align*}
 Then, using the same approach as for Eq.~\eqref{Eq:ProliferationEquivalentODE}, we calculate
 \begin{align*}
 \TimeDeriv I_n(t) = 2 b(t) \int_0^{\infty} h(a) i_{n-1}(t,a)\d a - \delta I_n(t) - b(t) \int_0^{\infty} h(a) i_{n}(t,a)\d a .
 \end{align*}
Under the assumption that $h(a) = \eta$, this differential equation for $I_n(t)$ becomes
\begin{align*}
 \TimeDeriv I_n(t) = 2 b(t)\eta I_{n-1}(t)  - \delta I_n(t) - b(t)\eta I_n(t),
 \end{align*}
where we clearly see the distinction between the loss of cells in generation $n$ due to mitosis and the appearance of cells in this generation due to proliferation of cells in generation $n-1$. In this way, we obtain an infinite system of ODEs for $\{ I_n(t) \}_{n=1}^{\infty}$. Assuming that $I_m(0) \geq 0$ for all $m$, it is simple to show by induction and using Gronwall's lemma that $I_m(t) \geq 0 $ for all $m$ \citep{Reid1930}. Returning to the intracellular vRNA concentration, the total amount of vRNA within each generation is given by
\begin{align*}
R_n(t) = \int_0^{\infty} i_n(t,a)r_n(t,a) \d a,
\end{align*}
and using the same approach as in Section~\ref{Sec:MultiscaleProliferation}, we calculate
\begin{align*}
\TimeDeriv R_n(t) = \alpha I_n(t)  - (\delta + \mu+ \rho)R_n(t) .
\end{align*}
As with the concentration of infected hepatocytes in the $n$-th generation, assuming that $R_m(0) \geq 0$ for all $m$ implies that $R_m(t) \geq 0$. We thus obtain the infinite system of ODEs
\begin{equation}\label{Eq:InfiniteSystemODE}
\left \{
\begin{aligned}
\TimeDeriv T(t) & =b(t) T(t) -\beta T(t) V(t), \\
\TimeDeriv I_0(t) &  =  \beta T(t) V(t)  - \eta b(t) I_0(t)-\delta I_0(t), \\
\TimeDeriv I_n(t) & =    2\eta b(t) I_{n-1}(t) -  \eta b(t) I_n(t)  - \delta I_{n}(t) , \  \textrm{for} \  n \geq 1,  \\
% =   2 \eta \gamma \left( 1 - \frac{T(t)+  \sum_{n=0}^{\infty} I_n(t) }{T_{max}}  \right) I_{n-1}(t) -  \eta \gamma \left( 1 - \frac{T(t)+  \sum_{n=0}^{\infty} I_n(t) }{T_{max}}  \right) I_{n}(t) - \delta I_{n}(t) , \  \textrm{for} \  n = 1,2,3,... \\
\TimeDeriv R_0(t) &  = \zeta \beta T(t) V(t) + \alpha I_0(t)  - (\delta + \mu+ \rho)R_0(t), \\
\TimeDeriv R_n(t) & = \alpha I_n(t)  - (\delta + \mu+ \rho)R_n(t), \  \textrm{for} \  n \geq 1,  \\
\TimeDeriv V(t) & = \rho \sum_{n=0}^{\infty} R_n(t) - cV(t).
\end{aligned}
\right.
\end{equation}

\subsubsection*{Tracking generational dynamics does not influence infection dynamics}

We now show that the dynamics of the infinite system of ODEs in Eq.~\eqref{Eq:InfiniteSystemODE} are precisely the same as those of the multiscale model with proliferation that did not track the generational inheritance. To do so, we remark that, if we could interchange infinite summation and differentiation, so that
\begin{align*}
\TimeDeriv  \displaystyle \sum_{n=0}^{\infty} I_n(t) =  \displaystyle  \sum_{n=0}^{\infty} \TimeDeriv I_n(t),
\end{align*}
then the resulting infinite series would telescope to exactly the ODE for $I(t)$ in Eq.~\eqref{Eq:LogisticProliferationEquivalentODE}. In Appendix~\ref{Sec:InfiniteODEProof}, we utilize the results on infinite systems of differential equations from \citet{McClure1976}, where they considered the generic infinite system of ODEs
\begin{align}\label{Eq:McClureODE}
\TimeDeriv y_i(t) = \displaystyle \sum_{j=1}^{\infty} a_{i,j}(t)y_j(t) + f_j(\tilde{y}(t)), \quad \textrm{for} \quad i = 1,2,...,
\end{align}
with $\tilde{y} = (y_1(t), y_2(t),...)$, to show that we can exchange differentiation and summation. \citet{McClure1976} demonstrated that a unique solution of Eq.~\eqref{Eq:McClureODE} exists and is a \textit{strongly continuous function}.  
\sloppy
\begin{definition} [Strongly continuous function]
The sequence $\tilde{y}(t) = (y_1(t), y_2(t),...)$ is a strongly continuous function on the domain $\mathcal{I} \subset \mathbb{R}$ if each $y_i(t)$ is continuous and
\begin{align*}
\|\tilde{y}(t) \| = \displaystyle \sum_{i=1}^{\infty} |y_i(t)|
\end{align*}
is uniformly convergent on compact subsets of $\mathcal{I}$.
\end{definition}

In Appendix~\ref{Sec:InfiniteODEProof}, we prove 
% that the infinite system Eq.~\eqref{Eq:InfiniteSystemODE} has a strongly continuous solution. 
\begin{proposition}\label{Prop:StronglyContinuousSolution}
Assume that the parameters in Eq.~\eqref{Eq:InfiniteSystemODE} are strictly positive, the proliferation function $b(t)$ is continuous and bounded, and that the initial conditions are non-negative. Then, the infinite system of ODEs~Eq.~\eqref{Eq:InfiniteSystemODE} has a strongly continuous solution. 
\end{proposition}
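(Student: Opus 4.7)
My plan is to recast Eq.~\eqref{Eq:InfiniteSystemODE} in the abstract form Eq.~\eqref{Eq:McClureODE} of \citet{McClure1976} and apply their existence theorem for strongly continuous solutions. To that end, I would enumerate all unknowns in a single sequence $\tilde y(t) = (T(t), V(t), I_0(t), R_0(t), I_1(t), R_1(t),\dots)$. The constant linear decays, the intra-generation transfers $\alpha I_n \to R_n$ and $2\eta b(t) I_{n-1} \to I_n$, the uninfected proliferation $b(t)T$, and the infinite coupling $\rho\sum_n R_n \to V$ are absorbed into the coefficient matrix $a_{i,j}(t)$, while the two quadratic source terms $\beta TV$ and $\zeta\beta TV$ constitute the nonlinear forcing $f_j(\tilde y)$.

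Next, I would verify the hypotheses of McClure's theorem for this decomposition. Continuity of the $a_{i,j}(t)$ is immediate from continuity of $b(t)$, and the requisite column sums are uniformly bounded on compact time intervals because each column of $A(t)$ has at most three non-zero entries, each controlled by $\|b\|_\infty$ and the constant model parameters. The nonlinearity $f_j$ depends only on the two components $T$ and $V$, so local Lipschitz continuity with respect to the strongly continuous norm reduces to elementary bilinear estimates. Componentwise non-negativity of the candidate solution follows by induction on $n$ via Gronwall's lemma, using the sign structure of Eq.~\eqref{Eq:InfiniteSystemODE}: each $I_n$ and $R_n$ equation has a non-negative source driven by earlier-indexed components together with a strictly negative diagonal coefficient, and the same argument applies to $T$ and $V$.

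The main obstacle is the chicken-and-egg issue that McClure's theorem needs the infinite coupling $\rho\sum_n R_n$ in the $V$ equation to be well-defined, while well-definedness of this sum presupposes the strongly continuous solution McClure is supposed to deliver. To break this circularity I would work with the partial sums $S_N(t) = \sum_{n=0}^N I_n(t)$ and $\tilde S_N(t) = \sum_{n=0}^N R_n(t)$. Summing the first $N+1$ equations for each yields closed scalar differential equations for $S_N$ and $\tilde S_N$ up to a non-positive remainder $-2\eta b(t) I_N$, and coupling these with the $T$ and $V$ equations produces a finite-dimensional system of differential inequalities whose coefficients are controlled by $\|b\|_\infty$ and the constant model parameters. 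A Gronwall argument then bounds $S_N$ and $\tilde S_N$ uniformly in $N$ on any compact time interval, which gives uniform absolute convergence of $\sum_n R_n(t)$, justifies the $V$ equation, and supplies the last missing hypothesis of McClure's theorem. The resulting strongly continuous solution on compact intervals extends globally by standard continuation, completing the proof.
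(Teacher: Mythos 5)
Your overall strategy --- recasting the system in the McClure--Wong form and checking their hypotheses --- matches the paper's, but your decomposition into linear and nonlinear parts has a gap that undermines the argument in the setting the proposition is meant to cover. In Eq.~\eqref{Eq:InfiniteSystemODE} the proliferation rate is $b(t) = b(t, I(t), T(t))$ with $I(t) = \sum_n I_n(t)$; the paper suppresses this state dependence notationally, but the logistic case is the one of interest throughout. You absorb every $b$-dependent term into the coefficient matrix $a_{i,j}(t)$. Since those ``coefficients'' then depend on the unknown infinite sum $\sum_n I_n$, your linear part is not linear, and McClure's theorem does not apply to this decomposition. (If $b$ really were an exogenous function of $t$ alone, your splitting would be legitimate and the result nearly immediate, but that is not the intended reading.) This is precisely why the paper places $f(\tilde x) = \eta\, b(t,\tilde x)\sum_n x_n$ in the nonlinear forcing and then has to work to verify \textbf{F2}: it establishes uniform convergence of $\sum_n I_n$ by writing each $I_n$ explicitly as an Erlang-kernel convolution of $I_0$ (Lemma~\ref{Lemma:UniformSolutionInfiniteODE}, using $\sum_n g_a^n(s) = a$) and then invokes Dini's theorem on compact subsets of $\ell_1$. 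Your partial-sum Gronwall argument bounds $S_N(t)$ uniformly in $N$ and hence gives pointwise finiteness of the monotone limit, but by itself it does not deliver the \emph{uniform} convergence on compact intervals that the definition of a strongly continuous solution (and hypothesis \textbf{F2}) requires; you would still need Dini's theorem or an explicit tail estimate of the kind the Erlang representation provides.

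A second, smaller issue: hypothesis \textbf{F3} demands a global linear-growth bound $|f(\tilde y)| \leq g(t)\|\tilde y\|_{\ell_1}$, and the bilinear infection terms $\beta TV$ and $\zeta\beta TV$ that you assign to $f$ do not satisfy such a bound on all of $\ell_1$; ``elementary bilinear estimates'' only give local Lipschitz continuity. You would need to restrict to an a priori bounded invariant set, as the paper does by working on compact $K \subset \ell_1$ with $\|\tilde x\| < T_{max}$, before this hypothesis can be checked. With the $b$-terms moved back into the nonlinearity and a genuine uniform-convergence argument supplied for the tails, your partial-sum bootstrap for the $\rho\sum_n R_n$ coupling would be a reasonable alternative to the paper's explicit-solution route, but as written the proof does not go through.
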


Now, let $\tilde{I}(t) = (I_1(t), I_2(t), ...)$ and $\tilde{R}(t) = (R_1(t), R_2(t),...),$ be the strongly continuous solution of Eq.~\eqref{Eq:InfiniteSystemODE} and define
\begin{align*}
J_N(t) = I_0(t) + \displaystyle \sum_{n=1}^N I_n(t) \quad \textrm{and} \quad P_N(t) = R_0(t) + \displaystyle \sum_{n=1}^N R_n(t).
\end{align*}
Proposition~\ref{Prop:StronglyContinuousSolution} ensures that $J_N$ and $P_N$ converge uniformly, so it is natural to define the total concentration of infected cells, $I(t)$, and intracellular vRNA, $R(t)$, as
\begin{align*}
I(t) = \lim_{N \to \infty} J_N(t)  \quad \textrm{and} \quad R(t) =  \lim_{N \to \infty} P_N(t). 
\end{align*}
Then, by adding and subtracting the term $\eta b(t) I_N(t)$ in the ODE for $J_N$, we calculate
\begin{align*}
\TimeDeriv J_N(t) & = \beta V(t)T(t) +   \eta b(t) J_{N}(t)  - \delta  J_{N}(t)  - 2\eta b(t) I_{N}(t), \\
\TimeDeriv P_N(t) & = \zeta \beta V(t) T(t) +  \alpha J_N(t)  - (\delta + \mu+ \rho)P_N(t).
\end{align*}
Since $J_N$ and $P_N$ converge uniformly and $I_N\to 0$ as $N \to \infty$, the sequences $J'_N$ and $P'_N$ converge uniformly, so after interchanging differentiation and summation, we find 
\begin{align*}
\TimeDeriv I(t)  & = \beta V(t)T(t) +  \eta b(t) I(t)  - \delta I(t), \ \textrm{and}  \\
\TimeDeriv R(t)  & = \zeta \beta V(t) T(t) +  \alpha I(t)  - (\rho+\mu+\delta)R(t), 
\end{align*}   
 which are precisely what we obtained in Eq.~\eqref{Eq:ProliferationEquivalentODE}.  Consequently, the dynamics of the total concentration of infected hepatocytes and intracellular vRNA in the infinite ODE that tracks generation inheritance Eq.~\eqref{Eq:InfiniteSystemODE} are exactly those of Eq.~\eqref{Eq:ProliferationEquivalentODE}, where we only considered infected hepatocytes arising from new infections or from proliferation of infected mother cells. We therefore do not need to account for the generational dynamics of intracellular vRNA, as the inheritance balances over all generations. Further, for non-negative parameters, the existence, uniqueness, and non-negativity of solutions to Eq.~\eqref{Eq:InfiniteSystemODE} is immediate.
 
% The average amount of vRNA in the daughters of cell that has not proliferated before is most likely smaller than in cell that have proliferated, so why consider both populations when taking the average.  What I am thinking is a cell with say 1 (or 0 ) vRNA at age 0 accumulates say 50 copies before dividing on average. Its progeny then start with 25 copies and then if they divide at the same age as their mother they would have 75 copies at the time of division. If division is independent of age as you assume below I do not know if the initial amount of vRNA in $i_p $ vs $i_0$ washes out - this is something we need to examine.
 
\section{Bifurcation analysis of the multiscale model with logistic proliferation}

 While other choices for the proliferation function $b$ are possible, the majority of models that consider hepatocyte proliferation utilize a logistic growth term \citep{Dahari2007,Dahari2009a,Reluga2009}. We therefore now analyse the bifurcation structure the multiscale model with logistic proliferation in Eq.~\eqref{Eq:LogisticProliferationEquivalentODE}. We give necessary and sufficient conditions for the existence and local stability of the uninfected equilibrium, a total infection equilibrium, where the infected cell population is self-sustaining in the absence of uninfected hepatocytes, and an infected equilibrium. We illustrate these results using Matcont \citep{Dhooge2008} in Fig.~\ref{Fig:NumericalBifurcationResults}.

\subsection*{Uninfected equilibrium}
We begin by establishing the local stability of the uninfected equilibrium. As would be expected from the biological interpretation of Eq.~\eqref{Eq:LogisticProliferationEquivalentODE}, the uninfected equilibria is given by 
\begin{align*}
(T^*,I^*,R^*,V^*) = (T_{max},0,0,0).
\end{align*}
The Jacobian of Eq.~\eqref{Eq:LogisticProliferationEquivalentODE} evaluated at the uninfected equilibrium is given by 
\begin{equation}
J(T_{max},0,0,0) = \left[ 
\begin{array}{cccc}
-\gamma & -\gamma & 0 & -\beta T_{max} \\
 0 & -\delta & 0 &  \beta T_{max} \\
 0 & \alpha & -(\rho+\mu+\delta) & \beta T_{max} \\
 0 & 0 & \rho & -c \\
\end{array}
\right].
\end{equation}
\sloppy 
The Jacobian matrix $J(T_{max},0,0,0)$ has an eigenvalue $\lambda_0 = -\gamma$, so after expanding $J(T_{max},0,0,0)$ along the first column, the remaining three eigenvalues are the roots of 
\begin{align*}
CP_M(\lambda) = [\beta \rho T_{max} - (\rho +\mu+\delta + \lambda)(c+\lambda)](\delta + \lambda) + \beta T_{max} \alpha \rho. 
\end{align*}
This is precisely the characteristic polynomial found by \citet{Elkaranshawy2024} in their analysis of the uninfected equilibrium in their similar model of HCV infection under the assumption that $T^* = T_{max}$. By utilizing the Routh-Hurwitz conditions and adapting the analysis of \citet{Elkaranshawy2024}, we conclude that the uninfected equilibria is locally asymptotically stable if and only if the basic reproductive number $\mathcal{R}$ satisifies 
\begin{align}\label{Eq:ReproductionNumber}
\mathcal{R} = \frac{\beta T_{max}(\alpha + \delta) \rho}{\delta c (\delta+\mu+\rho) } < 1.
\end{align}

\subsection*{Total infection equilibria of the multiscale model}

We now turn to persistently infected equilibrium of Eq.~\eqref{Eq:LogisticProliferationEquivalentODE} with non-zero equilibrium concentrations of infected hepatocytes $I^*$. From the differential equation for target cells in Eq.~\eqref{Eq:LogisticProliferationEquivalentODE}, $T(t)$, we find
\begin{align}\label{Eq:InfectedEquilibriumInfectionRate}
\beta T^* V^* = \gamma T^* \left(1-\frac{T^*+I^*}{T_{max}} \right). 
\end{align}
Now, we typically discard the equilibrium solution corresponding to $ T^{*} =0$. However, when we consider proliferation of infected hepatocytes, there may be an infected equilibrium corresponding to a population of infected hepatocyte that is self-sustaining \textit{without} new infections. We denote this total infection equilibrium by $( 0 ,I^{\dagger},R^{\dagger}, V^{\dagger})$. At this equilibrium, we immediately obtain  from Eq.~\eqref{Eq:LogisticProliferationEquivalentODE}
\begin{align*}
 \eta  \gamma \left[ \left(1 - \frac{\delta}{\eta \gamma} \right) - \frac{I^{\dagger}}{T_{max}} \right]I^{\dagger} = 0, 
\end{align*}
with corresponding equilibrium concentration $I^{\dagger} = ( 1-\frac{\delta}{\eta \gamma} )T_{max}$. This equilibrium is only biologically relevant if $ \eta  \gamma > \delta, $ and we calculate the remaining equilibrium concentrations
\begin{align*}
R^{\dagger} = \frac{\alpha( \eta  \gamma - \delta  )T_{max} }{\eta  \gamma (\rho+\mu+\delta)} \quad \textrm{and} \quad V^{\dagger} =  \frac{\rho \alpha( \eta  \gamma - \delta  )T_{max} }{ \eta  \gamma c (\rho+\mu+\delta)}.
\end{align*} 

The Jacobian of Eq.~\eqref{Eq:LogisticProliferationEquivalentODE} evaluated at $(0,I^{\dagger},R^{\dagger}, V^{\dagger})$ is given by 
\begin{equation*}
J(0,I^{\dagger},R^{\dagger},V^{\dagger}) = \left[ 
\begin{array}{cccc}
 \frac{\delta}{\eta}  - \beta V^{\dagger} & 0 & 0 & 0 \\
 \beta V^{\dagger} + \eta \gamma - \delta   & -\eta \gamma  & 0 & 0  \\
 \zeta \beta V^{\dagger}  & \alpha & -(\rho+\mu+\delta) & 0 \\
 0 & 0 & \rho & -c \\
\end{array}
\right].
\end{equation*}
This matrix is lower triangular with eigenvalues
 \begin{align*}
 \lambda_1 =  \frac{\delta}{\eta}  - \beta V^{\dagger}, \quad \lambda_2 = -\eta \gamma , \quad  \lambda_3 = -(\rho+\mu+\delta), \quad  \lambda_4 = -c. 
 \end{align*}
It is clear that the final three eigenvalues are strictly negative, so the total infection equilibrium $(0,I^{\dagger},R^{\dagger},V^{\dagger})$ is locally asymptotically stable if and only if $\lambda_1 < 0$. This condition is equivalent to $  \mathcal{R} > \mathcal{R}^{\dagger} $, where $\mathcal{R}$ is the basic reproduction number in Eq.~\eqref{Eq:ReproductionNumber}, and
 \begin{align}\label{Eq:RDaggerDefn}
 \mathcal{R}^{\dagger} = \frac{1}{\gamma \eta} \left( \gamma +  \frac{\rho \beta T_{max} }{  c (\rho+\mu+\delta)} \left( \alpha +  \gamma \eta \right)  \right). 
 \end{align}
 
\subsection*{Infected equilibria of the multiscale model with logistic proliferation}

Now, we consider equilibria of Eq.~\eqref{Eq:ProliferationEquivalentODE} with biologically feasible uninfected hepatocyte concentrations, $T^* \in (0, T_{max})$. The differential equation for circulating virus implies that, at equilibrium, $c V^* = \rho R^*$. Then, using Eq.~\eqref{Eq:InfectedEquilibriumInfectionRate} and this equality in the ODE for the total amount of intracellular vRNA, $R(t)$, in Eq.~\eqref{Eq:ProliferationEquivalentODE} implies
\begin{align*}
\zeta \gamma T^* \left(1-\frac{T^*+I^*}{T_{max}} \right) - (\rho + \mu +\delta) \left( \frac{c}{\rho}\right) \left(\frac{\gamma}{\beta}\right) \left(1-\frac{T^*+I^*}{T_{max}} \right) + \alpha I^* = 0. 
\end{align*}
We note that the above expression is linear in $I^*$, so regrouping terms gives 
\begin{align*}
I^* \left( \frac{r}{T_{max}} \left[ \frac{\zeta \rho \beta T^* - (\rho+\mu+\delta)c}{\rho \beta} 
\right] -\alpha \right) = \gamma  \left(1-\frac{T^*}{T_{max}} \right)\left[ \frac{\zeta \rho \beta T^* - (\rho+\mu+\delta)c}{\rho \beta} \right].
\end{align*}
Now, setting 
\begin{align*}
M(T^*) = \zeta \rho \beta T^* - (\rho + \mu + \delta)c \quad \textrm{and} \quad B = \frac{\alpha \rho \beta T_{max}}{\gamma },
\end{align*}
we find
\begin{align}\label{Eq:InfectedEquilibriumI1}
\frac{I^*}{T_{max}} =  \frac{(1-T^*/T_{max})M(T^*)}{M(T^*)-B}.
\end{align}
Now, we turn to the ODE for infected hepatocytes, and once again using Eq.~\eqref{Eq:InfectedEquilibriumInfectionRate}, we find
\begin{align*}
\delta I^* = \gamma \left(1 - \frac{T^* + I^*}{T_{max}} \right) (T^*+\eta I^*).
\end{align*} 
The equality in Eq.~\eqref{Eq:InfectedEquilibriumI1} yields
\begin{align*} %  \label{Eq:HepatocyteEquilibriumConc}
% \left( 1-\frac{T^*}{T_{max} } \right) \left[  \left( \frac{-\gamma B}{M(T^*)-B} \right) \left( \frac{ T^* ( M(T^*)-B) + \eta T_{max} (1-T^*/ T_{max}) M(T^*)}{M(T^*)-B} \right)  - \delta T_{max} \frac{M(T^*)}{M(T^*)-B} \right] = 0, \\
% \left( 1-\frac{T^*}{T_{max} } \right)  \left( \frac{ \gamma B T^* ( M(T^*)-B) +  \gamma B \eta T_{max} (1-T^*/ T_{max}) M(T^*)    + \delta T_{max}  M(T^*)\left( M(T^*)-B \right)  }{(M(T^*)-B)^2} \right)    = 0,
\left( 1-\frac{T^*}{T_{max} } \right)  \left( \frac{F(T^*)}{(M(T^*)-B)^2} \right)  = 0,
\end{align*}
which clearly has a solution corresponding to the uninfected equilibrium $T^* = T_{max}$ and where 
\begin{align*}
F(T^*) = \gamma B T^* ( M(T^*)-B) + \gamma B\eta T_{max} (1-T^*/ T_{max}) M(T^*) + \delta T_{max}M(T^*)(M(T^*)-B), 
\end{align*}
is a quadratic function of $T^*$. Then, the uninfected hepatocyte concentration, $T^*$, at an infected equilibrium must satisfy $F(T^*) = 0,$ while
\begin{align} \label{Eq:TMaxTranscritical}
F(T_{max}) = \frac{T_{max}}{\delta c(\rho + \mu +\delta)} \left[ \mathcal{R}-1\right]\left(\rho \beta T_{max}(\zeta-\alpha) - (\rho+\mu+\delta)c \right).
\end{align}
We immediately see that $\mathcal{R} =1$ implies $F(T_{max}) = 0 $ so the uninfected equilibrium undergoes a transcritical bifurcation at $ \mathcal{R}  = 1$ leading to the existence of a positive infected equilibrium. In Appendix~\ref{Appendix:BifurcationOfInfectedEquilibria}, we show that this infected equilibrium coalesces with the total infection equilibrium in a transcritical bifurcation when $\mathcal{R}  = \mathcal{R}^{\dagger}.$ Furthermore, we show that the two roots of $F$ collide in a saddle-node bifurcation if $\mathcal{R}  = \mathcal{R}^{\dagger} $ and  
\begin{align}\label{Eq:EtaSaddleNode}
\eta = 1  + \frac{\delta\zeta}{\alpha} + \frac{ \alpha \beta \rho T_{max}	}{\gamma(\rho+\mu+\delta)c} .
\end{align}
The first condition, $\mathcal{R}=\mathcal{R}^{\dagger}$, is precisely the necessary condition for the total infection equilibrium to undergo a transcritical bifurcation. Indeed, if both Eq.~\eqref{Eq:EtaSaddleNode} and $\mathcal{R}=\mathcal{R}^{\dagger}$ hold, then there is a saddle-node transcritical (SNTC) bifurcation of the total-infection equilibrium \citep{Donohue2020}. This SNTC bifurcation is a co-dimension two bifurcation where a saddle-node bifurcation collides with a transcritical bifurcation. These bifurcations have been observed in ecological models with highly nonlinear terms \citep{Touboul2018}, but, to our knowledge, not in a system with only quadratic nonlinearities. We illustrate this codimension-two bifurcation in Fig.~\ref{Fig:NumericalBifurcationResults}C-E.

%  In a neighbourhood of this SNTC bifurcation, there are three equilibrium solutions, two that arise from $F(T^*) = 0$ and the invariant total infection equilibrium. 
% We can further show that this SNTC bifurcation delineates between a forward and a backwards bifurcation of the uninfected equilibrium.

\subsubsection*{Infection induced fitness cost}

We next consider the case where HCV infection results in a proliferative fitness cost for infected hepatocytes, as has been observed experimentally, which corresponds to $\eta \leq 1$ \citep{Webster2013}. In Appendix~\ref{Appendix:InfectedEquilibria}, we characterise the existence of the biologically relevant infected equilibrium under this assumption and prove
\begin{lemma}\label{Lemma:InfectedEquilibrium}
Let the model parameters be positive. Further, assume that 
\begin{align*}
\eta \in [0,1], \quad  \zeta \leq 1, \quad \textrm{and} \quad \alpha > \zeta \max \left[ 1 ,  1-\eta +\sqrt{(1-\eta)^2 + 4\delta} \right].  
\end{align*}
Eq.~\eqref{Eq:LogisticProliferationEquivalentODE} has an uninfected equilibrium $(T_{max},0,0,0)$ and the totally infected equilibrium $(0,I^{\dagger},R, V^{\dagger})$. Further, Eq.~\eqref{Eq:LogisticProliferationEquivalentODE} has a unique infected equilibrium if and only if $1< \mathcal{R} < \mathcal{R}^{\dagger}$.
\end{lemma}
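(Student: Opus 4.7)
The plan is to observe that the uninfected and totally infected equilibria appear immediately by substitution in the steady-state version of Eq.~\eqref{Eq:LogisticProliferationEquivalentODE} (the latter was already exhibited in the analysis preceding Eq.~\eqref{Eq:RDaggerDefn}), so the substantive claim is the characterization of the infected equilibria with $T^{*}\in(0,T_{max})$. By the algebraic reduction carried out above Eq.~\eqref{Eq:TMaxTranscritical}, this is equivalent to counting the roots of the quadratic $F$ in the open interval $(0,T_{max})$, and I would attack it by combining a sign analysis of $F$ at the two endpoints with convexity.

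First, I would check that $F$ is strictly convex. Expanding in powers of $T^{*}$ using $M(T^{*})=\zeta\rho\beta T^{*}-(\rho+\mu+\delta)c$, the leading coefficient of $F$ works out to $\zeta\rho\beta[\gamma B(1-\eta)+\delta T_{max}\zeta\rho\beta]$, which is strictly positive because $\eta\le 1$. For the endpoint values, Eq.~\eqref{Eq:TMaxTranscritical} together with the first entry of the $\max$, $\alpha>\zeta$, forces the factor $\rho\beta T_{max}(\zeta-\alpha)-(\rho+\mu+\delta)c$ to be strictly negative, so
\begin{equation*}
\operatorname{sgn} F(T_{max}) \;=\; -\operatorname{sgn}(\mathcal{R}-1).
\end{equation*}
A parallel direct computation gives $F(0)=T_{max}(\rho+\mu+\delta)c\,[\delta(\rho+\mu+\delta)c+B(\delta-\gamma\eta)]$, and rearranging the definition Eq.~\eqref{Eq:RDaggerDefn} shows that the bracket is a positive multiple of $\mathcal{R}^{\dagger}-\mathcal{R}$. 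I would also note that with $\eta\le 1$ and the remaining parameters strictly positive, $\mathcal{R}^{\dagger}>1$ is automatic, so the interval $(1,\mathcal{R}^{\dagger})$ is non-degenerate.

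With these sign identities, sufficiency is immediate: when $1<\mathcal{R}<\mathcal{R}^{\dagger}$, the signs $F(0)>0>F(T_{max})$ together with strict convexity of $F$ yield exactly one root in $(0,T_{max})$, and a check via Eq.~\eqref{Eq:InfectedEquilibriumI1} and the identity $cV^{*}=\rho R^{*}$ confirms that the remaining coordinates $I^{*},R^{*},V^{*}$ are strictly positive. For necessity, the regime $\mathcal{R}\ge\mathcal{R}^{\dagger}$ is handled purely by convexity: both endpoints satisfy $F\le 0$, forcing $F\le 0$ on the chord $[0,T_{max}]$, which rules out any interior zero.

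The hard part will be the remaining regime $\mathcal{R}\le 1$, in which $F$ is non-negative at both endpoints and could a priori dip below zero and return, producing two interior roots that would both yield infected equilibria. This is where the stronger bound $\alpha>\zeta(1-\eta+\sqrt{(1-\eta)^{2}+4\delta})$ enters: squaring it rearranges to $\alpha^{2}-2\alpha\zeta(1-\eta)-4\delta\zeta^{2}>0$, and I would use this inequality, via either a discriminant estimate for $F$ or an explicit bound on its vertex $T^{*}_{\min}=-F'(0)/F''(0)$, to conclude that the minimum of $F$ on $[0,T_{max}]$ remains strictly positive whenever $\mathcal{R}\le 1$. Verifying this estimate cleanly, and checking that each entry of the $\max$ contributes in the appropriate sub-regime of $(\eta,\zeta,\delta)$, is the delicate calculation I would defer to Appendix~\ref{Appendix:InfectedEquilibria}.
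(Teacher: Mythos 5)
Your proposal is correct and follows essentially the same route as the paper's Appendix~\ref{Appendix:InfectedEquilibria}: reduce to the quadratic $F$, use $\eta\leq 1$ for a positive leading coefficient, read off $\operatorname{sgn}F(T_{max})$ from $\mathcal{R}-1$ and $\operatorname{sgn}F(0)$ from $\mathcal{R}^{\dagger}-\mathcal{R}$, and invoke the stronger $\alpha$-condition only in the regime $\mathcal{R}\leq 1$. The "delicate calculation" you defer is precisely what the paper carries out there, namely showing the vertex $-c_1/(2c_2)$ (and hence the larger root) lies beyond $T_{max}$, so that $F>0$ on $(0,T_{max})$.
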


\subsubsection*{Infection induced fitness advantage}
We next turn to the case where HCV infection imparts a proliferative fitness advantage, denoted by $\eta > 1,$ which may occur if HCV infection leads to hepatocellular carcinoma. Typical estimates for $T_{max}, \beta$, and $c$ from fitting clinical data satisfy $\beta T_{max} < c$ \citep{Rong2013,Rong2010}. It follows that
\begin{align*}
1< \frac{1+ \frac{\alpha \rho \beta T_{max} }{  \gamma c (\rho+\mu+\delta)} }{1- \frac{\rho \beta T_{max} }{  c (\rho+\mu+\delta)} } < \eta  \quad \textrm{implies} \quad \mathcal{R}^{\dagger} < 1.
\end{align*} 
 Consequently, the local stability analysis suggests that there is a possible region of bistability of the uninfected and the total infection equilibrium for $\mathcal{R} \in (\mathcal{R}^{\dagger},1)$. This is only possible if the uninfected equilibrium undergoes a backwards bifurcation at $\mathcal{R} =1$. 

This backwards bifurcation is not typical in viral dynamics models but was observed by \citet{Reluga2009} in a variant of the standard model of viral dynamics that includes proliferation of HCV infected hepatocytes. There, \citet{Reluga2009} noted that the viral dynamics are relatively fast compared to the dynamics of infected hepatocytes. Thus, they assumed that these two concentrations were in quasi-steady state so the standard viral dynamics model reduces to a system of two differential equations. Then, \citet{Reluga2009} show that bistability between the uninfected and total infected equilibria can only occur if infected hepatocytes have a proliferative fitness advantage compared to uninfected hepatocytes, which precisely corresponds to the condition $\eta > 1$. We illustrate this backwards bifurcation in Fig.~\ref{Fig:NumericalBifurcationResults}.
 
\subsection*{Summary and numerical bifurcation analysis}

We utilize Matcont, a numerical continuation software package commonly used in mathematical biology \citep{Dhooge2008,Sanche2021}, to illustrate these analytical results. We take $\mathcal{R}$ as the bifurcation parameter and calculate the corresponding infection rate $\beta$ by re-writing Eq.~\eqref{Eq:ReproductionNumber}  
\begin{align*}
\beta = \mathcal{R} \frac{\delta c (\delta+\mu+\rho)}{ T_{max}(\alpha + \delta) \rho}.
\end{align*}
We use previously estimated model parameters \citep{Elkaranshawy2024,Rong2013a,Kitagawa2018,Rong2013} given in Table~\ref{Table:ModelSimulationParameters} in this numerical bifurcation analysis. However, we consider logistic growth, rather than constant production, of uninfected hepatocytes. Consequently, there are fewer uninfected hepatocytes produced without the constant production rate $\lambda$. Therefore, more virus must be produced by each infected cell to allow for persistent infection. Thus, we increase the production rate of intracellular vRNA over previously published estimates \citep{Elkaranshawy2024,Rong2013a,Kitagawa2018} to  $\alpha = 300$ virions/cell/day to ensure that the predicted dynamics of $V(t)$ are consistent with clinically observed viral loads. Finally, we set the hepatocyte reproduction rate $\gamma = 0.2$/day throughout, although our qualitative results are robust to differences in these parameters.

\begin{table}[!ht]
\centering
\setlength\tabcolsep{1.5pt}
\noindent 
\begin{tabular*}{0.95\hsize}{@{\extracolsep{\fill}}  cll}\sphline %@{\extracolsep{\fill}} 
Parameter & Value   \\ \hline
$\rho$ (/day) & 8.18  \\
$\mu$ (/day) & 1  \\
$c$ (/day) & 22.3    \\
$T_{max}$ (cells/mL) & 1.35$\times 10^7$   \\
$\zeta$ (copies/cell) & 1    \\
$ \gamma $ (/day) &  0.20  \\ 
$\alpha$ (copies/cell/day) &  300   \\ 
$\delta$ (/day) &  0.05   \\ \hline   
% $\eta$ & 1 &  3.5 \\ 
%  \hline   
\end{tabular*}
\caption{\textbf{The viral dynamics parameter values utilized in the numerical bifurcation analysis.} The model parameters were taken from previously published estimates \citep{Elkaranshawy2024,Rong2013a,Kitagawa2018,Rong2013} except for $\alpha$, which was chosen to ensure that the predicted viral loads are consistent with clinically observed data. Further, $\eta$ was varied to illustrate the possible bifurcation diagrams. As we assumed logistic proliferation of uninfected hepatocytes, we did not consider their proliferation or per capita death rate and set $\lambda = 0$ and $d_T = 0$ throughout.}
\label{Table:ModelSimulationParameters}
\end{table}

\begin{figure}[htbp!] 
\centering  
\begin{tabular}{c} 
 \includegraphics[trim= 13 10 9 5,clip,width=0.90\textwidth]{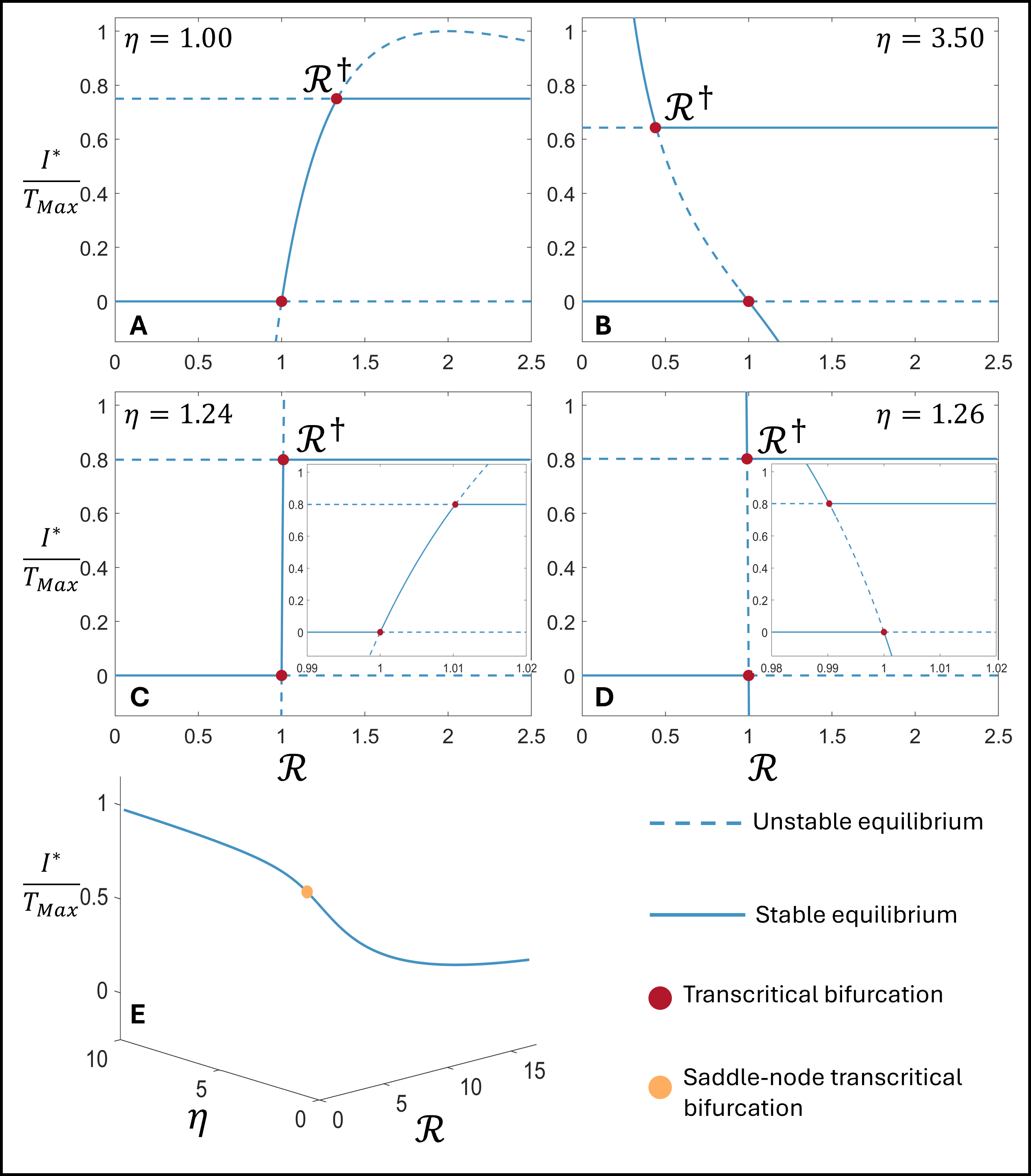} 
\end{tabular}
\caption{ \textbf{Numerical bifurcation analysis of Eq.~\eqref{Eq:LogisticProliferationEquivalentODE}} All panels show the fraction of infected hepatocytes at equilibria, $I^*/T_{max}$ with parameters given in Table~\ref{Table:ModelSimulationParameters}. In Panels A-D, solid lines denote locally stable equilibria, dashed lines denote unstable equilibria, and transcritical bifurcations are denoted by red circles. \textbf{(A)} For $\eta = 1$, the uninfected equilibrium undergoes a forward transcritical bifurcation at $\mathcal{R} = 1$. The resulting infected equilibrium coalesces with the total-infection equilibrium and undergoes a second transcritical bifurcation at $\mathcal{R} = \mathcal{R}^{\dagger}$. \textbf{B)} For $\eta = 3.5$ and $\mathcal{R}^{\dagger} < 1$, the infection-free equilibrium undergoes a backwards bifurcation at $\mathcal{R}=1$, leading to the existence of an unstable infected equilibrium, and bistability between the uninfected and the total-infection equilibrium for $\mathcal{R}^{\dagger} < \mathcal{R} < 1$. \textbf{C)} Unfolding the SNTC bifurcation for $\eta = 1.26$, where the uninfected equilibrium undergoes a backwards transcritical bifurcation at $\mathcal{R} = 1$. \textbf{D)} Unfolding the SNTC bifurcation for $\eta = 1.26$, where the uninfected equilibrium undergoes a backwards transcritical bifurcation at $\mathcal{R} = 1$. In both \textbf{C} and \textbf{D}, the inset diagrams show the forwards and backwards bifurcations, respectively. \textbf{E)} Two parameter continuation of the transcritical bifurcation of the total-infection equilibrium as a function of the model parameters $\mathcal{R}$ and $\eta$. The solid line denotes the value of $I^{*}/T_{max}$ as both parameters vary. The saddle-node transcritical bifurcation is denoted by the yellow circle at $\eta = 1.2501$ and $\mathcal{R} = 1.0000.$ }
\label{Fig:NumericalBifurcationResults}
\end{figure}

We have shown that the reproduction number $\mathcal{R}$ in Eq.~\eqref{Eq:ReproductionNumber} determines the local stability of equilibrium solutions of Eq.~\eqref{Eq:LogisticProliferationEquivalentODE}. In Fig.~\ref{Fig:NumericalBifurcationResults}A, we show the bifurcation diagram for $\eta=1$, where $\mathcal{R}^{\dagger} >1$. Here, Eq.~\eqref{Eq:LogisticProliferationEquivalentODE} has the forward bifurcation structure typically found in viral dynamics models. Specifically, we find that if $\mathcal{R} < 1$, then the uninfected equilibrium is locally stable. There is a transcritical bifurcation of the uninfected equilibrium at $\mathcal{R} = 1$. Then, the resulting infected equilibrium is locally stable for $1 < \mathcal{R} < \mathcal{R}^{\dagger}$. Finally, this infected equilibrium undergoes a further transcritical bifurcation at $\mathcal{R} = \mathcal{R}^{\dagger}$ where the total infection equilibrium becomes locally stable. 

In Fig.~\ref{Fig:NumericalBifurcationResults}B, we consider the case where $\eta = 3.5$ and $\mathcal{R}^{\dagger} < 1$. Unlike the standard viral dynamics model, we observe a backwards bifurcation at $\mathcal{R} = 1$, where the unstable unique infected equilibrium acts as a separatrix between the uninfected and the total infection equilibria, which are both stable. There is consequently a regime of bistability in Eq.~\eqref{Eq:LogisticProliferationEquivalentODE} that corresponds to a population of infected hepatocytes that is self-sustaining due to proliferation. This total infected equilibrium is then locally stable for $ \mathcal{R} > \mathcal{R}^{\dagger} .$ 

Finally, we utilize two parameter continuation of the model parameters $\mathcal{R}$ and $\eta$ to illustrate the unfolding of the SNTC bifurcation of the total-infection equilibrium. The total infection equilibrium is biologically relevant for $\eta > \delta/r = 0.25$ and the SNTC bifurcation occurs at $\eta = 1.2501$ and $\mathcal{R} = 1.0000$. In Fig.~\ref{Fig:NumericalBifurcationResults}C and D, we show an unfolding of the SNTC bifurcation with $\eta = 1.26$ and $\eta = 1.24$. Here, we see the near vertical infection equilibrium following the transcritical bifurcation of the uninfected equilibrium, with a backwards bifurcation occurring at $\eta = 1.26$ and a forward bifurcation occurring at $\eta = 1.24$; these curves coalesce into a cusp at the SNTC bifurcation point. Further, Fig.~\ref{Fig:NumericalBifurcationResults}E shows the continuation of the transcritical bifurcation between the infected  and the total infection equilibria. We note that, as infected cells have an increased fitness advantage with $\eta >1$, the fraction $I^{\dagger}/T_{max}$ approaches unity despite the expected number of secondary infections, $\mathcal{R}$, tending to 0.

\section{Discussion}
The standard multiscale model of HCV infection is a coupled system of integro-partial differential equations that has been instrumental in our understanding of HCV viral dynamics. While the standard multiscale model includes a detailed description of the intracellular HCV life cycle, it does not include infected cell proliferation. However, infected cell proliferation, which may occur at a different rate than for uninfected hepatocytes, may be an important driver of HCV persistence. There is a need for modeling approaches that capture both the intracellular viral life cycle and proliferation of infected hepatocytes. However, including cell proliferation within these coupled PDE and integrodifferential equations is complicated by the inheritance of intracellular viral material to both daughter cells. Appropriately accounting for this generational inheritance is non-trivial in structured models \citep{Metz1986} and it is tempting to immediately adapt equivalent ODE formulations of the multiscale model to include the proliferation of infected hepatocytes without first considering the underlying structured PDE. As we showed in Appendix~\ref{Appendix:Comparison}, direct adaptations of the ODE model without considering the underlying multiscale PDE may lead to non-biological production of intracellular vRNA during proliferation.  

Here, we developed a multiscale model of HCV infection that includes proliferation of infected hepatocytes and the corresponding inheritance of intracellular virus. We included proliferation of infected hepatocytes through the boundary conditions of the time-since-infection structured PDE. In our modeling framework, we distinguished between newly infected hepatocytes and infected hepatocytes arising via proliferation of infected cells. Consequently, the boundary conditions of our structured PDE accurately represent the distinct mechanisms by which infected cells with age $a = 0$ may arise. This distinction is crucial when modeling the inheritance of intracellular viral material, as the initial concentration of vRNA can be drastically different between newly infected cells and those that arise from proliferation. After developing our  multiscale model of inheritance, we show that the model is mathematically equivalent to a system of ODEs under the assumption that the rate constants for intracellular production and degradation of vRNA and the per capita death rate of infected cells are independent of infection age. We also show that this modeling approach that only differentiates between newly infected hepatocytes and those that arise from proliferation is precisely equivalent to a model that explicitly tracks the generational dynamics of infected hepatocytes using techniques from infinite systems of differential equations.

We calculated the basic reproduction number and performed a detailed bifurcation analysis of the resulting model using both analytical and numerical continuation methods. We identified a region of bistability where both the uninfected and total-infection equilibrium are stable. This region of bistability results from the proliferation of infected hepatocytes occurring more rapidly than proliferation of uninfected cells, which might occur in cases where HCV infection leads to hepatocellular carcinoma. If the fitness advantage of infected cells is large enough, then the bistability would persist even in the case of extremely effective antiviral treatment that drives the expected number of secondary infections, $\mathcal{R}$, to zero. Furthermore, while Eq.~\eqref{Eq:LogisticProliferationEquivalentODE} is relatively simple, with only quadratic nonlinearities, we also demonstrated the existence of a saddle-node transcritical bifurcation. This bifurcation separates parameter regimes where the uninfected equilibrium undergoes a forward or backward transcritical bifurcation. While these saddle-node transcritical bifurcations have been observed in highly nonlinear ecological models, to our knowledge, this is the first example in a viral dynamics model with only quadratic nonlinearities.

It is important to note that our analysis relies on the assumption that the rate constants describing intracellular vRNA production, degradation, and the per capita death rate of infected hepatocytes are constant. However, these rate constants will likely depend on the cell's infected age, and possibly, on their infected generation. Consequently, future work may include extending our modeling framework to the more biologically relevant case of age dependent parameters as in \cite{Nelson2004,Hailegiorgis2023}. In recent work along these lines, \citet{Wang2024} analysed a similar multiscale model without proliferation and demonstrated the existence of a forward bifurcation of the uninfected equilibrium without assuming that the production rate of intracellular vRNA is constant.  

All told, we have developed a multiscale modeling framework that captures both the intracellular dynamics of HCV infection as well as cellular proliferation and inheritance. This modeling framework could be extended to other persistent viral infections or to understand inheritance of intracellulars factors in other therapeutic settings.
% such as bacterial persisters \citep{Keren2004,VanBoxtel2017} or solid tumours \citep{Cassidy2021,Yang2017}.  

\section*{Acknowledgements}

Portions of this work were done under the auspices of the U.S. Department of Energy under contract 89233218CNA000001 and supported by NIH grants R01-AI078881, R01-OD011095 (ASP), and R01-AI116868 (RMR). TC was supported by the Engineering and Physical Sciences Research Council [grant number 32061].

 \clearpage
 
 \appendix

\section{Generational tracking of infected hepatocytes}\label{Sec:InfiniteODEProof}

Here, we prove that the infinite system of ODEs in Eq.~\eqref{Eq:InfiniteSystemODE} has a strongly continuous solution by using the results of \citet{McClure1976}, who considered the generic infinite system of ODEs
\begin{align}\label{Eq:McClureODEAppendix}
\TimeDeriv y_i(t) = \displaystyle \sum_{j=1}^{\infty} a_{i,j}(t)y_j(t) + f_j(\tilde{y}(t)), \quad \textrm{for} \quad i = 1,2,3,....
\end{align}
By converting the infinite ODE~\eqref{Eq:McClureODEAppendix} to the equivalent system of integral equations, \citet{McClure1976} prove that
\begin{theorem}[Theorem 4.2 of \citep{McClure1976}] \label{Thm:McClure}
Assume that the linear portion of the right-hand side of Eq.~\eqref{Eq:McClureODEAppendix} satisfies
\begin{enumerate}
\item[\textbf{A1)}] Each linear coefficient $a_{i,j}(t)$ is continuous
\item[\textbf{A2)}] The function $w(t) = \sup\{ Re(a_{ij}(t)\}$ is locally bounded
\item[\textbf{A3)}] For each $j$, the sum $\displaystyle \sum_{i \neq j} |a_{i,j}(t)|$ converges uniformly on compact subsets of $\mathbb{R}$.
\end{enumerate}
Furthermore, assume that the nonlinear portion of Eq.~\eqref{Eq:McClureODEAppendix} satisfies
\begin{enumerate}
\item[\textbf{F1)}] Each function $f_i$ is continuous
\item[\textbf{F2)}] The sum $\|f \| = \displaystyle \sum_{i=1}^{\infty} |f_{i}(\tilde{y})|$ converges uniformly on bounded subsets of $\ell_1$
\item[\textbf{F3)}] There exists a continuous function $g(t)$ such that $|f (\tilde{y}) | \leq g(t) \|\tilde{y}\|_{\ell_1}$ for all time.
\end{enumerate}
Then, the infinite system of ODEs Eq.~\eqref{Eq:McClureODEAppendix} has a strongly continuous solution.
\end{theorem}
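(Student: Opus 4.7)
The plan is to verify the hypotheses \textbf{A1}--\textbf{A3} and \textbf{F1}--\textbf{F3} of Theorem~\ref{Thm:McClure} for the system Eq.~\eqref{Eq:InfiniteSystemODE}, with a localization step to handle the bilinear nonlinearity. First I would relabel the state as a single sequence $\tilde y(t) = (T(t), V(t), I_0(t), R_0(t), I_1(t), R_1(t), \ldots)$, casting Eq.~\eqref{Eq:InfiniteSystemODE} in the form Eq.~\eqref{Eq:McClureODEAppendix}. Reading off the coefficients, the linear part $(a_{ij}(t))$ has at most two nonzero entries in each column: the $I_n$-column couples to $R_n$ (with coefficient $\alpha$) and to $I_{n+1}$ (with $2\eta b(t)$), the $R_n$-column couples to $V$ (with $\rho$), and the $T$- and $V$-columns contribute no off-diagonal linear terms. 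The nonlinear part is supported on only three coordinates: $f_T = -\beta T V$, $f_{I_0} = \beta T V$, $f_{R_0} = \zeta \beta T V$.

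Hypotheses \textbf{A1}--\textbf{A3} and \textbf{F1}--\textbf{F2} are then straightforward bookkeeping. Continuity of each $a_{ij}$ follows from the continuity of $b$; the bound on the diagonal supremum $w(t)$ from the assumed boundedness of $b$; the uniform convergence on compacts required by \textbf{A3} is trivial because each column sum has at most two terms. Each $f_i$ is polynomial, so \textbf{F1} holds, and only three of the $f_i$ are nonzero, so $\sum_i |f_i(\tilde y)| = (2+\zeta)\beta |T V|$ is a finite sum, trivially uniformly convergent on bounded $\ell_1$-balls.

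The main obstacle is \textbf{F3}, which demands a linear bound $|f(\tilde y)| \leq g(t) \|\tilde y\|_{\ell_1}$, whereas the bilinear $\beta T V$ satisfies only the quadratic bound $|T V| \leq \tfrac14 \|\tilde y\|_{\ell_1}^2$. To circumvent this, I would fix a time horizon $[0, T^\star]$, choose $M$ sufficiently large (to be specified below), and replace $\beta T V$ in the three nonlinear terms by $\chi_M(T,V)\,\beta T V$, where $\chi_M$ is a smooth cutoff equal to $1$ on $\{|T|+|V|\leq M\}$ and vanishing on $\{|T|+|V|\geq 2M\}$. The cutoff nonlinearity obeys $|f(\tilde y)| \leq 2\beta(2+\zeta) M \, \|\tilde y\|_{\ell_1}$, satisfying \textbf{F3} with $g(t) = 2\beta(2+\zeta)M$. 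Theorem~\ref{Thm:McClure} applied to the modified system then yields a strongly continuous solution on $[0, T^\star]$.

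To close the argument I would establish a priori bounds on the modified solution, showing that $|T(t)|+|V(t)|\leq M$ on $[0, T^\star]$ whenever $M$ is chosen large enough in terms of the data. Using positivity of coordinates (obtained by Gronwall-type comparison, as in the main-text sketch for $I_m, R_m \geq 0$), the $T$ equation gives $T(t) \leq T(0)\exp\bigl(\sup_{[0,T^\star]} b \cdot t\bigr)$; termwise summation of the $I_n$- and $R_n$-equations yields coupled inequalities for $\sum_n I_n$ and $\sum_n R_n$ (valid because both sums converge by strong continuity of $\tilde y$) that in turn bound $V$ through $V' + c V \leq \rho \sum_n R_n$ and one further application of Gronwall. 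Once $|T|+|V|\leq M$ holds, the cutoff is inactive and the modified solution solves the original Eq.~\eqref{Eq:InfiniteSystemODE}. The most delicate step will be arranging these a priori bounds so that $M$ depends only on data known in advance of invoking Theorem~\ref{Thm:McClure}, thereby breaking any circular dependence between the existence result and the choice of $M$.
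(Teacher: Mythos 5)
Your proposal does not prove the stated theorem; it applies it. Theorem~\ref{Thm:McClure} is a general existence result for the generic infinite system Eq.~\eqref{Eq:McClureODEAppendix} under hypotheses \textbf{A1}--\textbf{A3} and \textbf{F1}--\textbf{F3}, and in the paper it is an imported result: the authors simply quote Theorem~4.2 of \citet{McClure1976}, noting only that the proof there proceeds by converting the ODE system to an equivalent system of integral equations. What you have written is instead a verification that the paper's specific model Eq.~\eqref{Eq:InfiniteSystemODE} satisfies \textbf{A1}--\textbf{F3}, followed by an invocation of Theorem~\ref{Thm:McClure} to conclude existence. That is the content of Proposition~\ref{Prop:StronglyContinuousSolution}, proved in Appendix~\ref{Sec:InfiniteODEProof}, not of the theorem itself; read as a proof of Theorem~\ref{Thm:McClure} it is circular, since you assume the very conclusion you are asked to establish.

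A genuine proof of Theorem~\ref{Thm:McClure} would have to work with the abstract system: use the diagonal coefficients and \textbf{A2} to write a variation-of-constants (integral) reformulation, set up a successive-approximation or fixed-point argument in the Banach space of strongly continuous $\ell_1$-valued functions on a compact interval, and use \textbf{A3}, \textbf{F2}, and the linear growth bound \textbf{F3} to show that the iteration is well defined, converges uniformly, and preserves strong continuity of the limit. None of these steps appears in your proposal. Separately, your observation that the bilinear term $\beta T V$ violates \textbf{F3}, and your cutoff-plus-a-priori-bound workaround, address a legitimate difficulty in the \emph{application} of the theorem --- the paper handles the same difficulty by restricting to compact subsets of $\ell_1$, where $\|\tilde{x}\|$ is uniformly convergent and bounded, and invoking Dini's theorem rather than a truncation --- but that discussion belongs to the proof of Proposition~\ref{Prop:StronglyContinuousSolution}, not to the statement you were asked to prove.
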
 

We now return to the infinite system of ODEs in Eq.~\eqref{Eq:InfiniteSystemODE}, where the nonlinear portion of the infinite system of ODEs arises from the proliferation of infected hepatocytes.
% \begin{align*} 
% \eta b(t) = \eta \gamma \left( 1 - \frac{T(t) + \| \tilde{I}\| }{T_{max}} \right). 
% \end{align*}
 In developing the model of generational inheritance for infected cells, we explicitly distinguished between the proliferation of cells in generation $n-1$ to produce cells in generation $n$, and the loss of these cells in generation $n$ due to proliferation
\begin{align}\label{Eq:GenerationTerm}
        \TimeDeriv I_n(t) = 2 \eta b(t)  I_{n-1}(t) - \eta b(t) I_n(t) - \delta  I_n(t).
\end{align}
However, we can rewrite this equation to emphasize the net growth rate, $ \phi(t)  = \eta b(t) - \delta $, of infected cells. By adding and subtracting $\eta b(t)I_n$, Eq.~\eqref{Eq:GenerationTerm} becomes
\begin{align}\label{Eq:TransitCompartmentInfected}
    \TimeDeriv I_n(t) = 2\eta b(t) \left( I_{n-1}(t) -I_n(t) \right) + \phi(t) I_n(t), \quad \textrm{for} \quad n = 1,2,3,.... 
 \end{align}
The differential equation in Eq.~\eqref{Eq:TransitCompartmentInfected} is a transit compartment equation that models inflow and outflow from $n$-th generation at the nonconstant, but bounded, proliferation rate $2b(t)$, and captures the net growth of the population via the net growth rate  $\phi(t).$ Similar transit compartment ODEs occur frequently in applications of the linear chain technique \citep{Cassidy2020a,Champredon2018,Cassidy2018a}. 

Now, recall that newly infected hepatocytes must pass through the intervening $n$ generations before reaching the $n-$th generation. Progression through these generations is controlled by the proliferation rate, so the transit rate between generations is given by $2\eta b(t)$. Then, the concatenation of these $n$ generations, or stages, corresponds to an Erlang distribution with shape parameter $n$ for the time-lag from new infection to reaching generation $n$. These infected hepatocytes will undergo growth (or decay) depending on the \textit{net} growth rate $\phi(t)$. Thus, the concentration of infected hepatocytes in generation $n$ at the present is entirely determined by the concentration of newly infected hepatocytes in the past, and the net growth of these cells over the intervening time interval. Indeed, \citet{Cassidy2018a} prove that the solution of Eq.~\eqref{Eq:TransitCompartmentInfected} must be given by
\begin{equation}\label{Eq:InfectedCellSolution}
    I_n(t) = \int_0^{\infty} 2 \eta b(t-s) I_0(t-s) \exp \left[ \int_{t-s}^t \phi ( \sigma ) \d \sigma \right] g_{1}^n \left( \int_{t-s}^t 2\eta b(\sigma) \d \sigma \right) \d s,
\end{equation}
where 
\begin{align*}
    g_a^n(s) = \frac{a^n}{(n-1)!} s^{n-1}\exp(-a s )
\end{align*}
is the probability density function of the Erlang distribution with shape parameter $n$ and rate parameter $a$. The crux of this result is the relationship
\begin{align*}
    \frac{\d}{\d s} g_a^1(s) = - g_a^1(s) \quad \textrm{and} \quad  \frac{\d}{\d s} g_a^n(s) = a[g_a^{n-1}(s) - g_a^{n}(s)] \quad  \textrm{for} \quad n \geq 2,
\end{align*}
as using this equality and Leibniz's rule establishes Eq.~\eqref{Eq:InfectedCellSolution}. 
% \begin{align}
%    \TimeDeriv I_n(t) = \TimeDeriv \left[ \int_{-\infty}^{t} 2 \eta b(s) I_0(s) \exp \left( \int_{s}^t \phi(\sigma) \d \sigma \right) g_{1}^n \left(\int_{s}^t 2\eta b(\sigma) \d \sigma \right) \d s \right] 
% \end{align}
Consequently, both our biological intuition and Eq.~\eqref{Eq:InfectedCellSolution} link the concentration of infected cells in the $n$-th generation with newly infected cells in the past. This additional structure on the possible solutions of Eq.~\eqref{Eq:TransitCompartmentInfected} allows us to prove 
\begin{lemma}\label{Lemma:UniformSolutionInfiniteODE}
    Assume that proliferation rate $b(t)$ is bounded and the ODE Eq.~\eqref{Eq:InfiniteSystemODE} has a solution with $\tilde{I}(t) = (I_1(t), I_2(t),...).$ Then, the infinite series
    \begin{align*}
        \| \tilde{I} (t) \| =  \displaystyle \sum_{n=1}^{\infty} I_n(t)
    \end{align*}
    converges uniformly on compact time intervals.
    \end{lemma}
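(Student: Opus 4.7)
The plan is to exploit the closed-form representation \eqref{Eq:InfectedCellSolution} together with the elementary identity
\begin{align*}
\sum_{n=1}^{\infty} g_1^n(x) = e^{-x}\sum_{k=0}^{\infty} \frac{x^k}{k!} = 1 \quad \text{for every } x \ge 0,
\end{align*}
so that the infinite summation over generations inside the integrand collapses to $1$. Since $I_n(t) \ge 0$ (established by the induction-plus-Gronwall argument given earlier in the paper) and every factor appearing in \eqref{Eq:InfectedCellSolution} is nonnegative, the partial sums $S_N(t) := \sum_{n=1}^{N} I_n(t)$ form a monotone increasing sequence of continuous functions of $t$.

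First I would apply Tonelli's theorem to exchange summation and integration in $S_N$, then invoke monotone convergence to pass $N \to \infty$, obtaining the explicit formula
\begin{align*}
S(t) := \sum_{n=1}^{\infty} I_n(t) = \int_0^{\infty} 2\eta b(t-s)\, I_0(t-s) \exp\!\left[\int_{t-s}^{t} \phi(\sigma)\,\d\sigma\right] \d s,
\end{align*}
valid whenever this integral is finite. Using the hypothesis that $b$ is bounded by some constant $B$ on a compact interval $[0,T]$, the net-growth exponent $\phi = \eta b - \delta$ is likewise bounded, so the exponential factor is majorized by $e^{(\eta B + \delta)T}$. Combined with continuity of $I_0$ and the usual localization of the effective integration range to $s \in [0,t]$ (either by convention that the past data extend nonnegatively and integrably, or by explicitly absorbing initial conditions into a separate propagator term), this makes $S(t)$ finite for every $t \in [0,T]$ and shows $S \in C([0,T])$ by dominated convergence applied to the parameter integral.

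To upgrade pointwise convergence to uniform convergence on compact intervals I would invoke Dini's theorem: $\{S_N\}$ is a monotone increasing sequence of continuous functions on the compact set $[0,T]$ converging pointwise to a continuous limit $S$, hence the convergence is necessarily uniform on $[0,T]$.

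The main obstacle is the bookkeeping around initial conditions for the higher generations: formula \eqref{Eq:InfectedCellSolution} was derived assuming the transit cascade is seeded only by the source term $2\eta b(t) I_0(t)$, so if one allows nonzero $I_n(0) \ge 0$ for $n \ge 1$, a second propagator term must be added. Summing that term over $n$ contributes a quantity controlled by $\|\tilde I(0)\|_{\ell^1} e^{(\eta B + \delta)T}$, which is finite under the standing assumption that the initial data are strongly continuous, and the same Tonelli–monotone convergence–Dini pipeline then applies to yield the full uniform-convergence statement.
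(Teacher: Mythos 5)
Your proposal follows essentially the same route as the paper's proof: the explicit Erlang-kernel representation of $I_n(t)$ from Eq.~\eqref{Eq:InfectedCellSolution}, a Tonelli interchange of sum and integral justified by nonnegativity, and the identity $\sum_{n\geq 1} g_1^n(s)=1$ collapsing the generational sum to the closed-form integral. Your additional invocation of Dini's theorem to upgrade pointwise to uniform convergence, and your remark about seeding the cascade with nonzero $I_n(0)$ for $n\geq 1$, make the argument slightly more careful than the paper's, which asserts uniform convergence directly after the interchange.
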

\begin{proof}
    We use the explicit expression for $I_n(t)$ in Eq.~\eqref{Eq:InfectedCellSolution} to explicitly calculate
    \begin{align*}
       \| \tilde{I}(t) \| =  \displaystyle \sum_{n=1}^{\infty} \int_0^{\infty} 2\eta b(t-s) I_0(t-s) \exp \left[ \int_{t-s}^t \phi(\sigma) \d \sigma \right] g_{1}^n \left(\int_{t-s}^t 2b(\sigma) \d \sigma \right) \d s.
    \end{align*}
    The integrand is strictly positive, so interchanging integration and summation using the Tonelli-Fubini theorem, we find
    \begin{align*}
       \| \tilde{I}(t) \ | =  \int_0^{\infty} 2\eta b(t-s) I_0(t-s) \exp \left[ \int_{t-s}^t \phi(\sigma) \d \sigma \right] \displaystyle \sum_{n=1}^{\infty}  g_{1}^n \left(\int_{t-s}^t 2b(\sigma) \d \sigma \right) \d s.
    \end{align*}
    Now,  the definition of $g_a^{n}(s)$ implies
    \begin{align*}
        \displaystyle \sum_{n=1}^{\infty}  g_{a}^n(s) = a \exp(-a s ) \displaystyle \sum_{n=1}^{\infty}  \frac{a^{n-1}  s^{n-1} }{(n-1)!} .
    \end{align*} 
    Of course, the latter series converges uniformly to $e^{as}$ and we find 
    \begin{align*}
        \displaystyle \sum_{n=1}^{\infty}  g_{a}^n(s) =  a. 
    \end{align*}
    It follows that the infinite series for $\| \tilde{I} \|$ converges uniformly to 
\begin{align*}
       \| \tilde{I}(t) \| =  \int_0^{\infty} 2\eta b(t-s) I_0(t-s) \exp \left[ \int_{t-s}^t \phi(\sigma) \d \sigma \right]  \d s.
    \end{align*} 
\end{proof}
In Lemma~\ref{Lemma:UniformSolutionInfiniteODE}, we assumed that the solution $\tilde{I}(t)$ existed. However, we have not yet shown that a solution to Eq.~\eqref{Eq:InfiniteSystemODE} exists. To do so, we need to show that Eq.~\eqref{Eq:InfiniteSystemODE} satisfies the assumptions of Theorem~\ref{Thm:McClure}.

Now, Theorem~\ref{Thm:McClure} considers closed and bounded subsets $z \subset \ell_1 \times \mathbb{R}.$ However, we have shown in Lemma~\ref{Lemma:UniformSolutionInfiniteODE} that the infinite series defined by any solution $\tilde{I}$ of the infinite ODE must be uniformly convergent. The space of sequences $\tilde{x} \in \ell_1$ that is closed, bounded, and where $\|\tilde{x}\| $ converges uniformly are precisely compact subsets $K \subset \ell_1$. Even when considering compact subsets $K$, establishing the uniform convergence in assumption \textbf{F2} can be delicate. We thus recall
\begin{theorem}[Dini's theorem]
    Let $K$ be a compact metric space and assume that $f:K \to \mathbb{R}$ is a continuous function. Further, assume that $f_n:K\to \mathbb{R}$ is a sequence of continuous functions with $f_{n-1}(x) \leq f_{n}(x)$ for all $x\in K$ and $n \in \mathbb{N}$. If $f_n(x)$ converges to $f(x)$ pointwise at each $x\in K$, then the convergence is uniform. 
\end{theorem}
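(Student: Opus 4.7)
The plan is to reduce the theorem to the equivalent statement that a monotonically decreasing sequence of non-negative continuous functions on a compact metric space converging pointwise to $0$ must converge uniformly, and then exploit compactness in a standard way. Set $g_n(x) = f(x) - f_n(x)$. Continuity of $f$ and $f_n$ gives continuity of $g_n$; the hypothesis $f_{n-1}(x) \leq f_n(x)$ makes $(g_n)$ pointwise monotonically decreasing; and because $(f_n(x))$ is monotone increasing to $f(x)$ we get $f_n(x) \leq f(x)$, hence $g_n(x) \geq 0$ and $g_n(x) \to 0$ pointwise. Uniform convergence of $f_n \to f$ is clearly equivalent to $\sup_{x \in K} g_n(x) \to 0$.

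Next I would fix $\epsilon > 0$ and build a uniform index via a covering argument. For each $x \in K$, pointwise convergence yields $N(x) \in \mathbb{N}$ with $g_{N(x)}(x) < \epsilon$. Continuity of $g_{N(x)}$ at $x$ produces an open neighborhood $U_x \subset K$ of $x$ on which $g_{N(x)}(y) < \epsilon$. Crucially, monotonicity of $(g_n)$ then promotes this to $g_n(y) \leq g_{N(x)}(y) < \epsilon$ for every $n \geq N(x)$ and every $y \in U_x$, i.e. the neighborhood estimate is stable under increasing the index.

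The family $\{U_x\}_{x \in K}$ is an open cover of $K$, so compactness extracts a finite subcover $U_{x_1}, \dots, U_{x_k}$. Setting $N = \max\{N(x_1), \dots, N(x_k)\}$, any $y \in K$ lies in some $U_{x_i}$, so for $n \geq N \geq N(x_i)$ we have $g_n(y) \leq g_{N(x_i)}(y) < \epsilon$. Taking the supremum over $y$ gives $\sup_{y \in K} g_n(y) \leq \epsilon$ for every $n \geq N$, which is the required uniform convergence.

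The argument is short, but each hypothesis has to be used in a specific place: continuity of $f$ and the $f_n$ is what lets a pointwise estimate at $x$ spread to an open neighborhood $U_x$; monotonicity is what propagates the estimate on $U_x$ from the single index $N(x)$ to all later indices; and compactness is what collapses the (possibly infinite) family of local thresholds $N(x)$ to a single global threshold $N$. The main thing to be careful about, and the only mild obstacle, is the direction of the inequalities after introducing $g_n$, and verifying that $f_n \leq f$ (rather than only $f_n \to f$) follows from monotonic pointwise convergence, so that $g_n \geq 0$ and the estimate $g_n \leq g_{N(x)}$ has its intended one-sided form.
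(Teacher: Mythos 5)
Your proof is correct: it is the standard textbook argument for Dini's theorem, reducing to the decreasing non-negative sequence $g_n = f - f_n$, spreading each pointwise estimate to a neighborhood by continuity, propagating it to all later indices by monotonicity, and extracting a finite subcover by compactness. Each step is sound, including the verification that $f_n \leq f$ (and hence $g_n \geq 0$) follows from monotone pointwise convergence. Note, however, that the paper does not prove this statement at all --- it is recalled as a classical auxiliary result (introduced with ``We thus recall'') and used as a black box in the proof of Proposition~\ref{Prop:StronglyContinuousSolution} to upgrade pointwise convergence of the partial sums $f_M$ to uniform convergence on compact subsets of $\ell_1$. So there is no in-paper proof to compare against; your argument simply supplies the standard justification for the cited theorem.
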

Furthermore, for initial conditions satisfying $\|\tilde{I}(0)\| > 0$, Gronwall's inequality shows that we need only consider solutions with $I_n(t) \geq 0$ . % and $\| \Tilde{I}\| < T_{max}.$ 

\begin{proposition}
Assume that the parameters in Eq.~\eqref{Eq:InfiniteSystemODE} are strictly positive, the proliferation function $b(t)$ is continuous and bounded, and the initial conditions $\tilde{I}(t) = (I_1(0),I_2(0),...)$ are non-negative and belong to a compact subset $K \subset \ell_1$. Then, the infinite system of ODEs~Eq.~\eqref{Eq:InfiniteSystemODE} has a strongly continuous solution.
\end{proposition}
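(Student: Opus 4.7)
My plan is to cast Eq.~\eqref{Eq:InfiniteSystemODE} in the abstract form of Eq.~\eqref{Eq:McClureODEAppendix} and verify the six hypotheses A1--A3 and F1--F3 of Theorem~\ref{Thm:McClure}, working on the compact initial set $K \subset \ell_1$ and using Dini's theorem to upgrade the convergence of partial sums from pointwise to uniform where needed.

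First I would enumerate the state vector as $\tilde{y}(t) = (T, V, I_0, R_0, I_1, R_1, I_2, R_2, \ldots)$ and split each right-hand side into a linear part $\sum_j a_{i,j}(t) y_j(t)$ and a nonlinear residual $f_i(\tilde{y})$. The only genuine nonlinearity is the mass-action infection product $\beta T V$, appearing with coefficients $-1$, $+1$, $+\zeta$ in the $T$, $I_0$, and $R_0$ equations; every other coupling---the transit term $2\eta b(t) I_{n-1}$ in the $I_n$ equation, the $\alpha I_n$ source in the $R_n$ equation, and the virion-production sum $\rho \sum_{n\geq 0} R_n$ in the $V$ equation---is linear in $\tilde{y}$ with time-dependent coefficients determined by $b(t)$ and the fixed positive parameters. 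Hypotheses A1 and A2 then follow immediately from the continuity and boundedness of $b$, together with the positivity of $\delta, \mu, \rho, \alpha, c, \eta, \beta, \zeta$. For A3 I would exploit the sparsity of $(a_{i,j}(t))$: each column has at most three nonzero entries, since the $I_{n-1}$ column contributes only to the $I_{n-1}$, $I_n$, and $R_{n-1}$ rows, the $R_n$ column contributes only to the $R_n$ and $V$ rows, and the $T$, $V$, and $I_0$ columns each intersect finitely many rows. The resulting column sums $\sum_{i\neq j}|a_{i,j}(t)|$ are therefore finite and uniformly bounded on compact time intervals by a multiple of $\|b\|_\infty + \delta + \mu + \rho + \alpha + c$.

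For the nonlinear hypotheses, F1 is immediate since $\beta T V$ is a polynomial in two coordinates of $\tilde{y}$. Hypothesis F2 concerns uniform convergence of $\|f\|_{\ell_1}$; since the nonlinear vector has only three nonzero entries, $\|f(\tilde{y})\|_{\ell_1} = (2+\zeta)\beta |T V|$ is a finite sum, and the only residual convergence issue arises through the implicit use of the $\ell_1$ norm of $\tilde{y}$ when bounding $T$ and $V$. For F3, on the compact set $K$, setting $M = \sup_{\tilde{y} \in K} \|\tilde{y}\|_{\ell_1}$ yields $\|f(\tilde{y})\|_{\ell_1} \leq (2+\zeta)\beta M \|\tilde{y}\|_{\ell_1}$, supplying the admissible dominating function $g(t) \equiv (2+\zeta)\beta M$. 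The main obstacle, flagged in the discussion preceding the proposition, is that Theorem~\ref{Thm:McClure} is stated on closed bounded subsets of $\ell_1$, whereas the uniform-convergence content of F2 is only genuinely available on \emph{compact} subsets of $\ell_1$. I would bridge this gap by applying Dini's theorem to the monotone increasing sequence $S_N(\tilde{y}) = \sum_{i=1}^N |y_i|$ of continuous functions on $K$: since each $S_N$ is continuous and the pointwise limit $\|\cdot\|_{\ell_1}$ is continuous on $K$, Dini yields uniform convergence of $S_N$ to $\|\cdot\|_{\ell_1}$ on $K$. Composing with the continuous projections onto the $T$ and $V$ coordinates then secures the uniform control of $\|f\|_{\ell_1}$ on $K$ required to invoke Theorem~\ref{Thm:McClure}, which produces the strongly continuous solution of Eq.~\eqref{Eq:InfiniteSystemODE} asserted.
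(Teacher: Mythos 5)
There is a genuine gap, and it lies in your identification of the nonlinearity. You classify the transit terms $2\eta b(t) I_{n-1}$ and $-\eta b(t) I_n$ as \emph{linear} couplings ``with time-dependent coefficients determined by $b(t)$,'' leaving $\beta T V$ as the only nonlinear residual, so that $f$ has just three nonzero entries and F2 becomes vacuous. But in this model $b(t)$ is shorthand for $b(t, I(t), T(t))$ with $I(t) = \sum_n I_n(t)$ --- the paper explicitly suppresses this dependence for notational convenience, and the motivating case is logistic growth $b = \gamma\bigl(1 - (T + \sum_n I_n)/T_{max}\bigr)$. The terms $\eta\, b(t,\tilde{y})\, I_n$ are therefore quadratic, each involving the full infinite sum $\sum_m I_m$, and the nonlinear vector $f$ has infinitely many nonzero components. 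The series $\sum_i |f_i(\tilde{y})|$ is then a genuinely infinite series whose uniform convergence on compact subsets of $\ell_1$ is the actual content of hypothesis F2; this is precisely where the paper applies Dini's theorem, to the increasing partial sums $f_M(\tilde{x}) = \sum_{i=1}^{M} \eta\, b(t,\tilde{x})\, x_i$ converging to the continuous limit $\eta\, b(t,\tilde{x}) \sum_i x_i$, after verifying a Lipschitz estimate $|f(\tilde{x}) - f(\tilde{y})| \leq \eta b_{max} \|\tilde{x}-\tilde{y}\|$. Your application of Dini to $S_N(\tilde{y}) = \sum_{i=1}^N |y_i| \to \|\tilde{y}\|_{\ell_1}$ is a correct auxiliary observation but does not substitute for this step, and your dominating function $g(t) = (2+\zeta)\beta M$ in F3 covers only the mass-action term, not the proliferation nonlinearity (the paper takes $g$ proportional to $\eta b_{max}$).

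Your verification of A1--A3 via column sparsity is fine as far as it goes (the paper's version is even simpler, since once the proliferation terms are moved into $f$ the remaining linear coefficients are constants and each column has only two nonzero entries), and your treatment would be essentially correct \emph{if} $b$ were an exogenous bounded function of time alone. To repair the argument you need to move the $b$-dependent terms into the nonlinear part, establish continuity of the resulting $f$ on $K \subset \ell_1$, and run the Dini argument on its partial sums.
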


\begin{proof}
We need only to show that Eq.~\eqref{Eq:InfiniteSystemODE} satisfies the assumptions of Theorem~\ref{Thm:McClure} in the set $K \times I \subset \ell_1 \times \mathbb{R}$. In what follows, we consider the biologically relevant sequences $\tilde{x} \in K \subset \ell_1$, where $K$ is the compact set of all sequences with $x_i \geq 0$ for all $i$. Since $K$ is compact, the series $\| \tilde{x} \| $ converges uniformly.  

Now, the linear coefficients $a_{i,j}$ are constant in time, so \textbf{A1)} and \textbf{A2)} are immediately satisfied. Furthermore, for each $j$, the generational structure of Eq.~\eqref{Eq:InfiniteSystemODE} implies that only $a_{j-1,j}$ and $a_{j,j}$ are non-zero. Therefore, $\| a_{i,j}\| $ is a sum of only two non-zero terms and converges uniformly in any compact time interval.

We now turn to the conditions on the non-linear portion of Eq.~\eqref{Eq:InfiniteSystemODE}. Consider a bounded subset $Z \times I \subset \ell_1 \times \mathbb{R}$ with $\|\tilde{x}\| < T_{max}$ and take $\tilde{x} \in Z $. The nonlinear portion of Eq.~\eqref{Eq:InfiniteSystemODE} is given by  
\begin{align*}
f(\tilde{x}) &  =  \eta b(t,\tilde{x}) \left(  \sum_{n=0}^{\infty} x_n(t) \right) ,
\end{align*}
where we emphasize the dependence of the proliferation rate $b$ on the $\tilde{x}$. 

We now use Dini's theorem to show that the series defining $f(\tilde{x})$ converges uniformly on a compact subset $K \times I \subset \ell_1 \times \mathbb{R}$. We define 
\begin{align*}
f_M(\tilde{x}) = \displaystyle \sum_{i=1}^{M} \eta  b(t,\tilde{x})  x_i,
\end{align*} 
where we explicitly write the possible dependence of $b$ on $\tilde{x}$. Now, since $\tilde{x} \in K,$ the sum in the definition of $f_M(\tilde{x})$ is well-defined. Then, it is straightforward to see that the sequence of functions $\{ f_M(x) \}$ is increasing at any $x$. Furthermore, $f_{M}$ converges pointwise to $ f(\tilde{x})$.  

We now show that $f(\tilde{x})$ is continuous. Consider $\tilde{x},\tilde{y} \in K $ such that $\| \tilde{x} - \tilde{y} \| < h$, so we find
\begin{align*}
\left| f(\tilde{x}) - f(\tilde{y}) \right| & = \left|  \eta \displaystyle \sum_{i=1}^{\infty} \left( b(t,\tilde{x}) x_i - b(t,\tilde{y})  y_i \right)   \right|.
\end{align*}  
Regrouping terms and using the triangle inequality gives
\begin{align*}
\left| f(\tilde{x}) - f(\tilde{y}) \right| \leq \eta b_{max} \|x -y \|,
\end{align*}
where $b_{max}$ is the maximum of $b(t,\tilde{x})$. We immediately conclude that $f$ is continuous. A similar argument shows that each $f_M$ is continuous. Then, $f_M$ is an increasing sequence of continuous functions that converges pointwise to the continuous function $f$. As we are only considering compact subspaces of $\ell_1$, Dini's theorem then implies that $f_M$ converges to $f$ uniformly. Consequently, \textbf{F2} holds.
 
Finally, setting $g(t) =  b_{max} \,\eta \| $ immediately implies that \textbf{F3} is satisfied. Consequently, Theorem~\ref{Thm:McClure} applies to Eq.~\eqref{Eq:InfiniteSystemODE} and this system has a strongly continuous solution.
% The preceding argument holds for more general bounded sequences by first considering the absolute value and then applying Cauchy's convergence criteria. 
\end{proof}

\section{Initial age distributions during chronic infection}\label{Appendix:InitialDensities}

Often, models such as Eq.~\eqref{Eq:NoProliferationMultiScale} are used to understand viral kinetics in chronically infected participants in clinical trials of novel antivirals \citep{Rong2013a,Guedj2013}. We show how the assumption of chronic infection naturally defines the initial densities $f_{i}(a)$ and $f_{r}(a)$. These functions describe the density of infected hepatocytes with infection age $a^*>0$ at time $t=0$. These cells must have been infected at time $t=-a^*$, so we can project the characteristics of Eq.~\eqref{Eq:NoProliferationMultiScale} backwards and link the initial densities $f_a$ and $f_r$ to infections that occurred in the past \citep{Cassidy2020b,Cassidy2021}.

Mathematically, we assume that the system is in equilibrium prior to the beginning of therapy and therefore search for the stable age distributions of the PDEs describing the intracellular dynamics \citep{Rong2013a,Cassidy2021,Rong2013}. These distributions are constant in time and are given by the solution of 
\begin{align*}
\frac{\d}{\d a} i(a) & =   - \delta i(a), \\
\frac{\d}{\d a} r(a) & =  \alpha - \left( \mu +\rho \right)r(a).
\end{align*}
We thus obtain
\begin{align} \label{Eq:InfectedStableAge}
i(a) & =  i_0 e^{-\delta a},
\end{align}
and 
\begin{align} 
r(a) &  = \frac{\alpha}{\mu + \rho} + \left( \zeta - \frac{\alpha}{\mu + \rho} \right)e^{-(\mu + \rho) a} .\label{Eq:RNAIntraCellularStableAge}
% r_0 e^{-(\mu + \rho) a} + \frac{\alpha}{\mu + \rho}\left( 1- e^{-(\mu + \rho) a} \right).  
%  = \left[ r_0 + \int_0^a \alpha e^{(\mu + \rho) u } \d u \right] e^{-(\mu + \rho) a}
\end{align}
 Thus, $r(a)$ is precisely the stable age distribution calculated by \citet{Rong2013} and is the natural definition for $f_r(a)$. Translating these stable age distributions to appropriate initial conditions for the system of ODEs obtained from multiscale PDE models can be delicate. However, we recall that the differential equation for the concentration of target cells does not arise from an underlying PDE model, so we can simply set $T(0) = T_0$ and $V(0) = V_0$. However, the variables $I(t)$ and $R(t)$ are defined via densities obtained from the corresponding age-structured PDE in Eq.~\eqref{Eq:NoProliferationMultiScale} along the characteristic lines. Therefore, their initial values are intrinsically linked to the initial age distribution of the intracellular variables. For example, it follows from the definition of $I(t)$ that $I(0)$ must be given by 
\begin{align*}
I_0 =\int_0^{\infty} i(0,a) \d a = \int_0^{\infty} f_i(a) \d a < \infty.
\end{align*}
We note that infected cells with age $ a^* > 0$ at time $t=0$ must have been infected at time $t= -a^*$ and that a fraction of these infected cells would die in the interim. Projecting the characteristics of \eqref{Eq:NoProliferationMultiScale} backwards from time $t=0$ \citep{Cassidy2020b}, the number of cells infected at time $t= -a^*$ must be given by
\begin{align*}
i(0-a^*,0) = i(0,a^*)e^{ \delta a^* } = f_i(a^*) e^{ \delta a^* },
\end{align*}
or alternatively, the number of infected cells that have died between infection and time $0$ is 
\begin{align*}
i(-a^*,0)- i(0,a^*) = i(-a^*,0) \left( 1- e^{-\delta a^*} \right),
\end{align*}
which corresponds precisely to the exponentially distributed lifespan assumed for infected cells. Now, we have assumed that the system is in a steady-state corresponding to chronic infection with viral load $V_0$ and the corresponding concentrations of uninfected hepatocytes,  $T_0$. The number of new infections at time $t = -a^*$ must therefore be $i(-a^*,0) = \beta V^* T^*$. As $a^*$ was arbitrary, the initial age distribution of infected cells is given by 
\begin{align*}
f_i(a) = i(-a,0)e^{-\delta a } =  \beta V^* T^* e^{-\delta a }. 
\end{align*} 
Thus, the initial number of infected cells is given by 
\begin{align*}
I(0) = \int_0^{\infty} f_i(a) \d a = \frac{\beta V^* T^*}{\delta}. 
\end{align*} 
Next, in chronic infection, we assume that the intracellular concentrations of vRNA have reached the stable age distribution given in Eq.~\eqref{Eq:RNAIntraCellularStableAge}, so 
\begin{align*}
R(0) & = \int_0^{\infty} r_0(a)i_0(a) \d a =  \int_0^{\infty} \frac{\alpha\beta V^* T^* }{\mu + \rho} e^{-\delta a} \d a  +  \int_0^{\infty}  \beta V^* T^*\left( \zeta - \frac{\alpha}{\mu + \rho} \right)e^{-(\mu + \rho+\delta) a}, \\
& =  \left( \frac{\alpha}{\mu + \rho} \right) \frac{\beta V^* T^*}{\delta} + \left( \beta V^* T^*\left( \zeta - \frac{\alpha}{\mu + \rho} \right)\right) \left( \frac{1}{\mu + \rho+\delta} \right).
\end{align*} 
 
\subsection{Generational distribution of infected hepatocytes in chronic infection}

In the preceding analysis, we showed how to calculate the initial density of infected hepatocytes, $f_i(a)$, using the stable age distribution of infected hepatocytes during chronic infection. This initial condition gives the total concentration of infected hepatocytes at time $t = 0$ without considering the generational structure of the infected cell population. We now show how the initial concentration of infected hepatocytes immediately defines the initial generational distribution of infected cells, $I_n(0)$, in Eq.~\eqref{Eq:InfiniteSystemODE}. 

We once again consider logistic proliferation and assume the initial concentration of uninfected and infected hepatocytes, $T(0)= T^*$ and $I(0) = I^*$, are known. Now,  we must have $T(0) + I(0) \leq T_{max}$ and the proliferation rate during chronic infection is
\begin{align*}
    b^* = \gamma \left(1 - \frac{T^* + I^*}{T_{max}} \right). 
\end{align*}
Now, we impose that the system is in equilibrium during chronic infection. Eq.~\eqref{Eq:GenerationTerm} yields
\begin{align*}
    I_n^* =  \frac{2\eta b^*}{\eta b^* + \delta} I^*_{n-1},
\end{align*}
which implies that 
\begin{align*}
    I_n^* = \left(  \frac{2\eta b^*}{\eta b^* + \delta} \right)^n I_0^*.
\end{align*}
Consequently, if we can calculate the initial concentration of infected cells in generation 0, we can obtain the initial concentration of infected hepatocytes for all later generations.

Now, the total concentration of infected hepatocytes at time $t= 0 $ is given by 
\begin{align}
    I^* =  I_0^* \displaystyle \sum_{n=0}^{\infty}\left(  \frac{2\eta b^*}{\eta b^* + \delta} \right)^n.
\end{align}
We immediately see that the initial concentration of infected hepatocytes is finite if and only if $2\eta b^* < \delta + \eta b^*,$ which is equivalent to $0 < \delta - \eta b^*.$ From this, we calculate
\begin{align*}
    I^* & = I_0 \left( \frac{\eta b^* + \delta}{\delta - \eta b^* } \right),
\end{align*}
which immediately yields $I_0$ (and thus $I_n(0)$) as $T^*$ and $I^*$, and thus $b^*$, are known.

\section{Saddle-node and transcritical bifurcations of the infected equilibria}\label{Appendix:BifurcationOfInfectedEquilibria}
In the main text, we showed that, at any infected equilibria that is not the total infection equilibria, i.e. with $T^* \in (0,T_{max}),$ the concentration of uninfected hepatocytes must satisfy $F(T^*) = 0$ where
\begin{align*}
F(T^*) = \gamma B T^* ( M(T^*)-B) + \gamma B\eta T_{max} (1-T^*/ T_{max}) M(T^*) + \delta T_{max}M(T^*)(M(T^*)-B), 
\end{align*}
is a quadratic function of $T^*$. Here, we study the possible solutions of $F(T^*) = 0$. We re-write $F(T^*)$ as
\begin{align}\label{Eq:FDefinitionAppendix}
F(T^*) = c_2 (T^*)^2 + c_1 T^* + c_0, 
\end{align}
where, after using the definitions of $B$, $\mathcal{R}^{\dagger}$, and adding and subtracting $\delta \eta \rho \beta T_{max}$ from the constant term, we find
\begin{equation}\label{Eq:QuadraticCoefficientsAppendix}
\left \{
\begin{aligned}
c_2 & =  \zeta \rho^2 \beta^2T_{max} ( \alpha (1-\eta) + \delta \zeta ), \\ \notag
% c_1 & =    [T_{max} \zeta \rho \beta] \gamma B\eta  -[T_{max} \zeta \rho \beta] \delta B  - 2[T_{max} \zeta \rho \beta] (\rho + \mu + \delta)c\delta  - (\rho + \mu + \delta)c \gamma B(1-\eta) -\gamma B^2  \\
c_1 & = T_{max} \zeta \rho \beta (\rho + \mu + \delta)c\delta \eta  \left(  \mathcal{R} - \mathcal{R}^{\dagger}- \frac{1}{\eta} \right) - (\rho + \mu + \delta)c \gamma B(1-\eta) -\gamma B^2, \\
% T_{max} \zeta \rho \beta (\rho + \mu + \delta)c\delta \left(  \eta \mathcal{R} - 2 - \frac{  \rho \beta T_{max}}{(\rho + \mu + \delta)c } \left( \frac{\alpha}{r} + \eta \right)  \right) - (\rho + \mu + \delta)c \gamma B(1-\eta) -\gamma B^2 \\
c_0 & = \left[ (\rho + \mu + \delta)c \right]^2 \delta T_{max} \eta \left( \mathcal{R}^{\dagger}-\mathcal{R} \right) .
 % \left[ (\rho + \mu + \delta)c \right]^2\delta T_{max}\left( 1 - \eta \mathcal{R} + \frac{  \rho \beta T_{max}}{(\rho + \mu + \delta)c } \left( \frac{\alpha}{r} + \eta \right) \right)
\end{aligned}
\right.
\end{equation}
First, we see that, if $\mathcal{R} = \mathcal{R}^{\dagger}$, then $c_0= 0$ and thus $F(0) = 0$. This root corresponds to the infected equilibrium coalesces with the total infection equilibrium $(0,I^{\dagger},R^{\dagger},V^{\dagger})$ in a transcritical bifurcation at $\mathcal{R} = \mathcal{R}^{\dagger}$.  

Now, as $F(T^*)$ is a quadratic function of $T^*$, it has exactly two roots. However, the equilibria defined by these roots may not be biologically relevant. The two roots of $F(T^*)$ are real-valued if and only if $c_1^2 - 4c_0c_2 > 0$, and these roots collide in a saddle-node bifurcation when $c_1^2 - 4c_0c_2 = 0$. 

We use the definition of $B$ to calculate
\begin{align*}
c_1^2 =\left[  \rho \beta T_{max} \left( \rho+\mu+ \delta \right)c\right]^2 \left[ \zeta \delta \eta \left( \mathcal{R}-\mathcal{R}^{\dagger} \right)- \left( \alpha(1-\eta) + \delta\zeta \right) - \frac{\alpha^2 \beta \rho T_{max}	}{\gamma (\rho+\mu+\delta)c} \right]^2,
\end{align*}
and
\begin{align*}
4c_0c_2 = 4\left[ \rho \beta T_{max}\left( \rho+\mu+\delta \right) c \right]^2 \zeta \delta \eta \left( \alpha (1-\eta) +\delta \zeta \right)\left( \mathcal{R}-\mathcal{R}^{\dagger} \right).
\end{align*}
Now, for notational simplicity, we denote 
\begin{align*}
k_1 = \zeta \delta \eta \left( \mathcal{R}-\mathcal{R}^{\dagger} \right), \quad k_2 =  \alpha(1-\eta) + \delta\zeta, \quad \textrm{and} \quad k_3 = \frac{\alpha^2 \beta \rho T_{max}	}{\gamma ( \rho+\mu+\delta)c}.
\end{align*}
Then, imposing $c_1^2 = 4c_0c_2$ and cancelling common terms gives
\begin{align*}
\left(k_1 -k_2 -k_3\right)^2 = 4k_1k_2.
\end{align*}
We can re-arrange and factor the resulting expression to find
\begin{align*}
k_1^2 -2k_1(3k_2+k_3) + (k_2+k_3)^2 = 0.
\end{align*}
Therefore, we have $c_1^2 - 4c_0c_2 = 0$, and thus a saddle-node bifurcation if both $k_1 = 0$ and $k_2 = -k_3$ are satisfied. The definitions of $k_{1}$ implies that $ k_1 = 0$ if and only if $ \mathcal{R}=\mathcal{R}^{\dagger}$, which gives the first condition in the main text. Conversely, the condition $k_2 = -k_3$ implies
\begin{align}\label{Eq:EtaSaddleNodeAppendix}
\eta = 1  + \frac{\delta\zeta}{\alpha} + \frac{ \alpha \beta \rho T_{max}	}{\gamma(\rho+\mu+\delta)c} ,
\end{align}
which is the second condition in the main text. We note that a saddle-node bifurcation can only occur if $\eta > 1,$ which corresponds to infected hepatocytes having a fitness advantage over uninfected hepatocytes.

\section{Infected equilibria if HCV infection induces a fitness cost}\label{Appendix:InfectedEquilibria}
In the preceeding analysis, we showed that a saddle-node bifurcation giving rise to two infected equilibria can only occur if infected hepatocytes have a fitness advantage over uninfected hepatocytes. Here, we show that if $\eta \leq 1,$ then it is not possible to have two infected equilibria, and thus rule out the co-existence of infected equilibria if infected hepatocytes do \textit{not} have a fitness advantage over uninfected hepatocytes. In short, we prove Lemma~\ref{Lemma:InfectedEquilibrium} in the main text. As before, we utilize the definition of $F(T^*)$ in Eq.~\eqref{Eq:FDefinitionAppendix}, with the coefficients $c_i$ given in Eq.~\eqref{Eq:QuadraticCoefficientsAppendix}. 

\begin{lemma}\label{Lemma:InfectedEquilibriumAppendix}
Let the model parameters be positive. Further, assume that 
\begin{align*}
\eta \in [0,1], \quad  \zeta \leq 1, \quad \textrm{and} \quad \alpha > \zeta \max \left[ 1 ,  1-\eta +\sqrt{(1-\eta)^2 + 4\delta} \right].  
\end{align*}
Eq.~\eqref{Eq:LogisticProliferationEquivalentODE} has an uninfected equilibrium $(T_{max},0,0,0)$ and the totally infected equilibrium $(0,I^{\dagger},R, V^{\dagger})$. In addition, Eq.~\eqref{Eq:LogisticProliferationEquivalentODE} has a unique infected equilibrium if and only if $1< \mathcal{R} < \mathcal{R}^{\dagger}$.
\end{lemma}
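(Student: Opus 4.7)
The plan is to characterize when the quadratic polynomial $F(T^*)$ from Eq.~\eqref{Eq:FDefinitionAppendix} admits a root in the biologically relevant interval $(0, T_{max})$, and then to check that the associated $I^*$ obtained from Eq.~\eqref{Eq:InfectedEquilibriumI1} is positive. Under the hypothesis $\eta \in [0,1]$, the leading coefficient $c_2 = \zeta \rho^2\beta^2 T_{max}(\alpha(1-\eta)+\delta\zeta)$ is strictly positive, so $F$ is an upward-opening parabola; moreover $\eta \leq 1$ forces $\mathcal{R}^{\dagger} = \eta^{-1}+ \rho\beta T_{max}(\alpha+\gamma\eta)/[\gamma\eta c(\rho+\mu+\delta)] > 1$, so the interval $(1,\mathcal{R}^{\dagger})$ is non-empty. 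The uninfected and total-infection equilibria were already exhibited in the main text, so only the infected equilibrium requires work.

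The first substantive step is sign analysis at the endpoints of $(0,T_{max})$. Directly from Eq.~\eqref{Eq:QuadraticCoefficientsAppendix}, $\mathrm{sign}\, F(0)=\mathrm{sign}(\mathcal{R}^{\dagger}-\mathcal{R})$; and from Eq.~\eqref{Eq:TMaxTranscritical}, using the hypothesis $\alpha>\zeta$ to make $\rho\beta T_{max}(\zeta-\alpha)-(\rho+\mu+\delta)c<0$, we get $\mathrm{sign}\, F(T_{max}) = \mathrm{sign}(1-\mathcal{R})$. For the \emph{if} direction, when $1<\mathcal{R}<\mathcal{R}^{\dagger}$ we have $F(0)>0>F(T_{max})$, so the intermediate value theorem yields at least one root $T^{*}\in(0,T_{max})$; since $c_2>0$ and $F\to+\infty$ as $T^{*}\to+\infty$, the second root of the quadratic must lie in $(T_{max},\infty)$, giving a unique root in $(0,T_{max})$. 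The remaining concentrations are recovered via Eq.~\eqref{Eq:InfectedEquilibriumI1} together with $cV^{*}=\rho R^{*}$ and $(\rho+\mu+\delta)R^{*}=\zeta\beta V^{*}T^{*}+\alpha I^{*}$, and a short sign check using $\alpha>\zeta$ and $T^{*}\in(0,T_{max})$ confirms that $I^{*},R^{*},V^{*}>0$ with $T^{*}+I^{*}\leq T_{max}$.

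For the \emph{only if} direction, the regime $\mathcal{R}\geq\mathcal{R}^{\dagger}$ is handled immediately: then $F(0)\leq 0$ and $F(T_{max})<0$ with $c_2>0$, and the product of the roots $c_0/c_2\leq 0$ forces one root to be non-positive and the other to exceed $T_{max}$, ruling out any admissible equilibrium. The remaining regime $\mathcal{R}\leq 1$ is where the stronger hypothesis $\alpha>\zeta\bigl(1-\eta+\sqrt{(1-\eta)^{2}+4\delta}\bigr)$ enters. Squaring rearranges this exactly to $\alpha^{2}-2\zeta(1-\eta)\alpha-4\delta\zeta^{2}>0$, and my plan is to show that this inequality forces $F'(T_{max})=2c_2 T_{max}+c_1\leq 0$, i.e.\ that the vertex $-c_1/(2c_2)$ of $F$ lies at $T^{*}\geq T_{max}$. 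Combined with $F(T_{max})\geq 0$, this makes $F$ monotone decreasing and strictly positive on $[0,T_{max})$, excluding any root. Substituting $B=\alpha\rho\beta T_{max}/\gamma$ into $c_1$ and $c_2$ and using $\mathcal{R}\leq 1$ to absorb the $\mathcal{R}$-dependent term of $c_1$, the inequality $2c_2 T_{max}+c_1\leq 0$ should collapse, up to positive multiplicative factors, to the squared form of the hypothesis.

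The main obstacle is this last algebraic reduction. The expression for $c_1$ in Eq.~\eqref{Eq:QuadraticCoefficientsAppendix} contains the three competing contributions $T_{max}\zeta\rho\beta(\rho+\mu+\delta)c\delta\eta(\mathcal{R}-\mathcal{R}^{\dagger}-1/\eta)$, $-(\rho+\mu+\delta)c\gamma B(1-\eta)$, and $-\gamma B^{2}$, and matching $2c_2 T_{max}+c_1$ to a positive multiple of the quadratic $\alpha^{2}-2\zeta(1-\eta)\alpha-4\delta\zeta^{2}$ will require careful grouping of the $\alpha^{2}$, $\alpha(1-\eta)$, and $\delta\zeta^{2}$ terms across the three pieces, together with the observation that $\mathcal{R}\leq 1$ bounds the first piece by a term that is dominated by $2c_2 T_{max}$ under the hypothesis on $\alpha$. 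Once this reduction is verified, collecting the cases $\mathcal{R}\leq 1$, $1<\mathcal{R}<\mathcal{R}^{\dagger}$, and $\mathcal{R}\geq\mathcal{R}^{\dagger}$ yields Lemma~\ref{Lemma:InfectedEquilibriumAppendix}.
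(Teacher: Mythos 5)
Your proposal is correct and follows essentially the same route as the paper's Appendix~\ref{Appendix:InfectedEquilibria} proof: the same three-way case split on $\mathcal{R}$ relative to $1$ and $\mathcal{R}^{\dagger}$, the same endpoint sign analysis $\mathrm{sign}\,F(0)=\mathrm{sign}(\mathcal{R}^{\dagger}-\mathcal{R})$ and $\mathrm{sign}\,F(T_{max})=\mathrm{sign}(1-\mathcal{R})$ combined with the intermediate value theorem, and, for $\mathcal{R}\leq 1$, the same key step of pushing the vertex $-c_1/(2c_2)$ beyond $T_{max}$ (the paper phrases this as the larger root exceeding $T_{max}$, which is equivalent to your condition $F'(T_{max})\leq 0$). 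The algebraic reduction you defer is exactly the one the paper carries out, by writing $-c_1/(2c_2)$ as a sum of three nonnegative terms whose last term exceeds $T_{max}$ precisely because the hypothesis on $\alpha$ squares to $\alpha^{2}>2\zeta\alpha(1-\eta)+4\delta\zeta^{2}$.
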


\begin{proof}
The condition on $\eta$ implies that $c_2 > 0$ and $\mathcal{R}^{\dagger} > 1$, while the condition on $\alpha$ implies that
\begin{align*}
\alpha^2 > 2\zeta (\alpha(1-\eta) + \zeta \delta).
\end{align*}
We distinguish between three cases. 

\textbf{Case 1)} Assume that $1 < \mathcal{R} < \mathcal{R}^{\dagger}$. It follows that $c_0, c_2 >0$ and hence $F(0) > 0$ while $c_1<0.$ Now, we return to Eq.~\eqref{Eq:TMaxTranscritical} and note that $\textrm{sign}(F(T_{max}) ) = \textrm{sign} ( 1- \mathcal{R}),$
% \begin{align*}
% \textrm{sign}(F(T_{max}) ) = \textrm{sign} ( 1- \mathcal{R}),
% \end{align*}
since $\zeta < \alpha.$ Consequently, as $ \mathcal{R}>1$, we see that $F(T_{max}) < 0$. Recalling that $F$ is a quadratic function with positive leading co-efficient, the intermediate value theorem implies that $F$ has a unique root $T_{1}^* \in [0, T_{max}]$ that corresponds to an infected equilibrium. 

\textbf{Case 2)} Assume that $ 1 < \mathcal{R}^{\dagger} < \mathcal{R} $. It follows that $F(0) < 0$. Since $F$ has a positive leading coefficient, $F$ has a unique positive root $T_{2}^*$. However, $\mathcal{R}  > 1$, so Eq.~\eqref{Eq:TMaxTranscritical} implies that $F(T_{max}) < 0.$ Therefore, we must have $T_{2}^* > T_{max}$ and this equilibrium defined by this root is not biologically feasible.  
 
\textbf{Case 3)} Assume that $\mathcal{R} < 1 <  \mathcal{R}^{\dagger} $. It follows that both $c_0 > 0$ and $c_2 >0$. Then, Descartes' Rule of Signs indicates that $F$ has either two or zero positive roots.  Since $F(0) > 0$ and $F(T_{max}) > 0$, $F$ must have either 0 or two roots in the interval $(0,T_{max})$. We now show that these two positive roots, if they exist, do not define biologically feasible infected equilibria.

% If these positive roots exist, they are given by
% \begin{align*}
% T_{\pm}^* = \frac{-c_1 \pm \sqrt{c_1^2 -4c_0c_2} }{2c_2}.
% \end{align*}
% Then, the root defined by the positive square root, $ T_{+}^*$, satisfies
From the quadratic formula, the largest root $T_{+}^*$ satisfies
\begin{align*}
T_{+}^* \geq \frac{-c_1}{2c_2} = 
\frac{T_{max} \zeta \rho \beta (\rho + \mu + \delta)c\delta \eta  \left( \mathcal{R}^{\dagger} + \frac{1}{\eta}  - \mathcal{R} \right) + (\rho + \mu + \delta)c \gamma B(1-\eta) +\gamma B^2}{2 \zeta \rho^2 \beta^2T_{max} ( \alpha (1-\eta) + \delta \zeta )}.
\end{align*} 
Utilizing the condition $\mathcal{R} < \mathcal{R}^{\dagger}$ and the definition of $B$, we find
\begin{align*}
 T_{+}^* > \frac{  (\rho + \mu + \delta)c\delta }{ 2  \rho  \beta ( \alpha (1-\eta) + \delta \zeta ) }  + \frac{\alpha (\rho + \mu + \delta)c (1-\eta)}{2 \zeta \rho \beta ( \alpha (1-\eta) + \delta \zeta )} +   T_{max} \frac{\alpha }{2 \zeta  ( 1-\eta + \frac{\delta \zeta}{\alpha} )} . 
\end{align*} 
Recall that $\eta,\zeta \in [0,1]$ and $ \alpha > 2  ( 1 + \frac{\delta}{\alpha} )$ by the assumptions of the Lemma. Then $T_{+} > T_{max}$, so we conclude that $F(T^*)$ has no roots in the interval $(0,T_{max})$ in the case $\mathcal{R} < 1 < \mathcal{R}^{\dagger}$.
\end{proof}
 
\section{Including direct acting antiviral treatment in the multiscale model}\label{Appendix:Treatment}

\citet{Guedj2013} incorporated the direct acting antiviral effects of NS5A inhibitor treatment in Eq.~\eqref{Eq:NoProliferationMultiScale}. There, \citet{Guedj2013} included three distinct antiviral effects of NS5A inhibitor treatment, namely decreasing the production rate of intracellular vRNA by a factor $1-\epsilon_{\alpha}$, where $\epsilon_{\alpha} \in [0,1]$ is the drug effect on the production of intracellular vRNA, increasing the degradation rate of intracellular vRNA by a factor $\kappa \geq 1$, or decreasing the viral assembly and secretion rate by a factor $1-\epsilon_{s}$, where $\epsilon_s \in [0,1]$. For simplicity, the antiviral effects of the NS5A inhibitor were assumed to be constant during treatment \citep{Guedj2013,Rong2013a}. We give a schematic representation of the intracellular effect of NS5A inhibitor treatment in Fig.~\ref{Fig:TreatmentAppendix}.
\begin{figure}[htbp!]
\noindent
\centering 
\includegraphics[trim=  10  10 10 10,clip,width=0.95\textwidth]{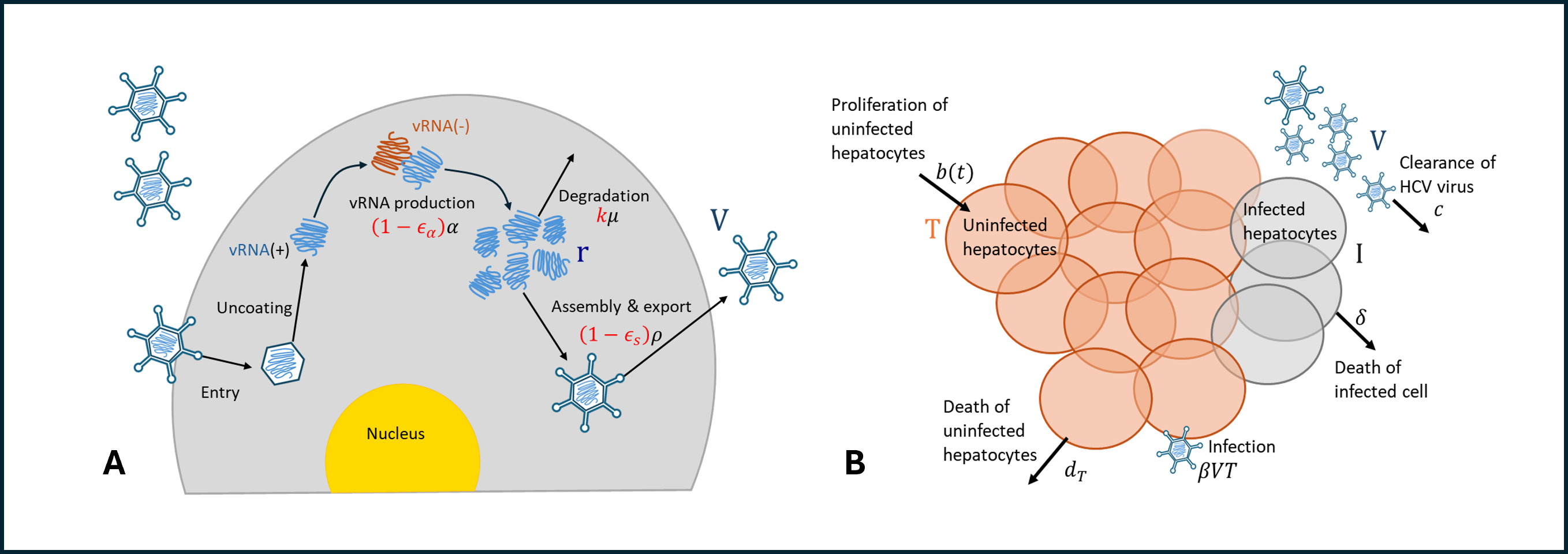}  
\caption{ \textbf{Schematic of the multiscale model of HCV infection with direct acting antiviral treatment}.  Panel A shows the intracellular HCV life cycle, which begins with infection of a hepatocyte and the release of positive strand vRNA, shown in blue and denoted vRNA(+), into the cell cytoplasm following the uncoating of the HCV viral capsid. This vRNA is translated into viral proteins, such as the RNA-dependent RNA polymerase and other proteins that first copy the positive strand vRNA into its complementary negative strand, shown in orange and denoted vRNA(-), and then form a replication complex or replication machine that generates new positive stand vRNA at rate $\alpha$, which can be degraded at rate $\mu$, or assembled into HCV particles and exported into the circulation at rate $\rho$.  Treatment with a direct acting antiviral, such as daclatasvir, that interacts with the HCV NS5A protein, inhibits the production of vRNA with an effectiveness of $\epsilon_{\alpha}$, inhibits the secretion of mature virus with effectiveness $\epsilon_s$ and increases the degradation of virus complexes by a factor $\kappa$. Panel B shows the HCV extracellular dynamics, where uninfected hepatocytes $(T)$ are produced due to proliferation at rate $b(t)$, and become infected cells $(I)$, following infection by virus $(V)$ at rate $\beta$. Infected cells are lost at per capita rate $\delta$ and secrete vRNA containing particles as virus ($V$) into the circulation. The circulating virus is cleared at per capita rate $c$.  }
\label{Fig:TreatmentAppendix}
\end{figure}
Under these assumptions, the multiscale model incorporating treatment is given by
\begin{equation} \label{Eq:TreatedNoProliferationMultiScale}
\left \{
\begin{aligned}
\TimeDeriv T(t) & = \lambda - d_T T(t) -\beta V(t) T(t), \\
(\partial_t+ \partial_a)i(t,a) & =  - \delta i(t,a), \\
(\partial_t+ \partial_a)r(t,a) & = \alpha(1-\epsilon_{\alpha})  -(\kappa \mu + \rho (1-\epsilon_{s}) ) r(t,a), \\
\TimeDeriv V(t) & = \int_{0}^{\infty} \rho (1-\epsilon_{s}) r(t,a)i(t,a) \d a  - cV(t).
\end{aligned}
\right.
\end{equation}
Then, \citet{Kitagawa2018} showed that this system of PDEs is equivalent to the ODE
\begin{equation} \label{Eq:NoProliferationODEAppendix}
\left \{
\begin{aligned}
\TimeDeriv T(t) & = \lambda - d_T T(t) -\beta V(t) T(t), \\
\TimeDeriv I(t) & = \beta V(t)T(t) - \delta I(t), \\
\TimeDeriv R(t) & = \zeta \beta V(t) T(t) +  \alpha (1-\epsilon_{\alpha}) I(t)  - (\rho (1-\epsilon_s)+ \kappa \mu+\delta)R(t), \\
\TimeDeriv V(t) & =  \rho R(t)  - cV(t).
\end{aligned}
\right.
\end{equation}
It is straightforward to include these antiviral effects in the models we developed throughout the main text. 

\section{Spontaneous generation of vRNA during proliferation in a multiscale model of HCV infection with proliferation}\label{Appendix:Comparison}

As previously mentioned, \citet{Elkaranshawy2024} recently adapted the multiscale model of HCV infection to include proliferation of infected hepatocytes. There, they directly worked with the equivalent ODE formulation found by \citet{Kitagawa2018} rather than the multiscale model capturing the intracellular viral life cycle as in Eq.~\eqref{Eq:ProliferationMultiScale}. After including the effects of antiviral NS5A treatment as in \citet{Guedj2013}, \citet{Elkaranshawy2024} obtained
\begin{equation} \label{Eq:ElkaranshawyProliferationODE}
\left \{
\begin{aligned}
\TimeDeriv T(t) & = \lambda + \gamma \left(1-\frac{T(t)+I(t)}{T_{max}} \right) T(t) -d_T T(t) - \beta V(t) T(t), \\
\TimeDeriv I(t) & = \beta V(t)T(t) +  \gamma \left(1-\frac{T(t)+I(t)}{T_{max}} \right)  I(t)  - \delta I(t), \\
\TimeDeriv R(t) & = \zeta \left( \beta V(t) T(t) +  \gamma \left(1-\frac{T(t)+I(t)}{T_{max}} \right)  I(t)  \right) +  \alpha( 1-\epsilon_{\alpha} ) I(t) \\
& {} \quad - (\rho (1-\epsilon_s )+\kappa \mu+\delta) R(t), \\
\TimeDeriv V(t) & =  \rho R(t)  - cV(t).
\end{aligned}
\right.
\end{equation}
The model in Eq.~\eqref{Eq:ElkaranshawyProliferationODE} differs only from our model Eq.~\eqref{Eq:LogisticProliferationEquivalentODE} by the inclusion of the logistic growth term in the differential equation for the total amount of intracellular vRNA within infected hepatocytes. We recall that $\zeta$ represents the amount of intracellular vRNA within newly infected hepatocytes due to infection. Consequently, this term models an increase in the intracellular vRNA within daughter cells of an infected hepatocyte. Thus, the second term in the ODE for $R(t)$ in   Eq.~\eqref{Eq:ElkaranshawyProliferationODE} corresponds to the generation of intracellular vRNA due to cellular proliferation. However, as we have already discussed, proliferation of an infected hepatocyte should, at most, conserve the total amount of intracellular vRNA, rather than generating new vRNA. Finally, we note that this spontaneous generation of vRNA upon proliferation would arise in our PDE model if we did \textit{not} distinguish between the different biological processes that lead to newly infected hepatocytes, $i_0(t,a)$, and those that lead to the production of infected hepatocytes arising from infected cell proliferation, $i_p(t,a)$.

\end{document}